\documentclass[a4paper, amsfonts, amssymb, amsmath, reprint, showkeys, nofootinbib, twoside,notitlepage,onecolumn]{revtex4-1}

\def\apj{{ApJ}}                 

\def\prd{{Phys.~Rev.~D}}        
\usepackage{amsmath,amstext}
\usepackage[T1]{fontenc}
\usepackage{amssymb}
\usepackage{graphicx}
\usepackage{ae,aecompl}

\DeclareFontFamily{OT1}{pzc}{}
\DeclareFontShape{OT1}{pzc}{m}{it}{<-> s * [1.10] pzcmi7t}{}
\DeclareMathAlphabet{\mathpzc}{OT1}{pzc}{m}{it}

\usepackage{hyperref}
\usepackage{amsmath}
\usepackage{amssymb}
\usepackage{mathtools}
\usepackage{bm}
\usepackage{cleveref}
\usepackage{tensor}
\usepackage{braket}
\usepackage{enumitem}
\usepackage{mhchem}
\usepackage{amsthm}
\usepackage{nccmath}
\usepackage{mathrsfs}
\usepackage{color}

\newcommand{\E}{{\mathbb{E}}}

\newcommand{\s}{{\sigma \!\! \! \sigma}}

\def\be{\begin{equation}}
\def\ee{\end{equation}}
\def\beq{\begin{eqnarray}}
\def\eeq{\end{eqnarray}}

\theoremstyle{definition}

\theoremstyle{theorem}
\newtheorem{theorem}{Theorem}

\theoremstyle{corollary}

\begin{document}
\title{The Lorentzian geometry of relaxation}
\author{L.~Gavassino}
\affiliation{Department of Applied Mathematics and Theoretical Physics, University of Cambridge, Wilberforce Road, Cambridge CB3 0WA, United Kingdom}

\begin{abstract}
We show that relativistic theories with purely relaxational excitation spectra, such as kinetic theory and transient hydrodynamics, naturally endow the dispersion plane $\{i\omega,ik\}$ with a Lorentzian geometric structure analogous to that of the Minkowski plane $\{t,x\}$. In this picture, timelike future-directed, timelike past-directed, and spacelike directions correspond respectively to relaxation-like, unstable-like, and evanescent-like modes. Under mild structural assumptions on the underlying theory, causality constrains dispersion relations to follow spacelike trajectories on the plane. This geometric viewpoint recasts longstanding problems in relativistic matter physics as elementary geometric ones that can often be solved graphically. As applications, we derive universal constraints on dispersion relations, deviations from time dilation, the observer dependence of spectral hierarchies, the regime of validity of hydrodynamics in boosted frame, the maximal allowed diffusivity and viscosity of relativistic media, and the presence of non-hydrodynamic branch cuts in kinetic theory.
\end{abstract} 
\maketitle

\section{Introduction}

Recent years have seen renewed interest in the physics of relativistic matter, driven by applications ranging from heavy-ion collisions to astrophysics and cosmology \cite{Heinz:2013th,RomatschkeReview:2017ejr,FlorkowskiReview2018,rezzolla_book,Shibata:2019wef,Arnold:2000dr,Hannestad:2006zg}. Among the questions attracting particular attention is how special relativity constrains the linear-response properties of a medium \cite{Pu2010,GavassinoCausality2021,GavassinoSuperlum2021,HellerBounds2022ejw,GavassinoBounds2023myj,HellerHydrohedron2023jtd,Brants:2024wrx,HuiNicolis:2025aja,GavassinoRadiativeBounds:2025bxx,HoultAnisotropic:2025htt,BhattacharyyaBoostedFrames:2025hjs,Bajec:2025dqm,GavassinoDisturbing:2026klp,BrantsUVCompletions:2026qlw}. In particular, how does causality shape the excitation spectrum? Or, how do moving media generally respond to external perturbations? Despite their apparent simplicity, these questions have proven remarkably subtle \cite{Sommerfeld1914,Brillouin1914,Brillouin1960,BludmanRuderman1968,Fox1969,AharonovKomarSussking1969,FoxKuperLipson1970,KrotscheckKundt1978,Hiscock_Insatibility_first_order,Kost2000,Adams:2006sv,GavassinoDisperisons2023mad,GavassinoDiffusionCompatible:2026tvy,GavassinoHowAcausalEquaitons:2026fil}, and a unified picture is still lacking.

On the causality side, much is known about the structure of correlators from the requirement of subluminal propagation alone \cite{Bogolyubov:1990kw, LowdonStructure,Lowdon:2017gpp}. However, the resulting (currently known) constraints on dispersion relations and transport coefficients \cite{HellerBounds2022ejw,HellerHydrohedron2023jtd} are comparatively weak, as they are expressed in terms of a ``validity radius'' of the dispersion relation whose value is generally not known \emph{a priori}, and may even be infinitesimal in principle \cite{GavassinoHowAcausalEquaitons:2026fil}. The generic properties of excitation spectra of moving media are even less understood. Naively, one might expect boosted spectra to resemble their rest frame counterparts, up to time-dilation factors. In reality, this intuition applies only to patterns that comove with the bulk of the medium. In the presence of transport, relativity of simultaneity can reorder events within the perturbation, qualitatively altering the observed dynamics \cite{GavassinoDisturbing:2026klp}. A single relaxation time can split into multiple (indeed, into a continuum of) relaxation times under a Lorentz boost. A mode that is longest-lived in one reference frame may become shortest-lived in another frame. In extreme cases, a system may even appear stable in one frame and unstable in another \cite{Hiscock_Insatibility_first_order,BuchelInstabilityGaussBonnet:2026rep} (a phenomenon that occurs only when causality is violated \cite{GavassinoSuperlum2021}, but whose interpretation has puzzled the community for decades).

Here, we introduce a geometric framework for understanding the spectral properties of relativistic matter. Focusing on linear excitation modes of the form $\propto e^{ikx-i\omega t}$ that are non-oscillatory (namely, with both $k$ and $\omega$ imaginary), we show that linear-response questions can be reformulated as geometric problems involving curves, or more general sets, in the plane $\{i\omega,ik\}$. We further show that this plane is naturally endowed with a causal structure mirroring that of the Minkowski plane $\{t,x\}$. Within the resulting framework, the constraints imposed by causality and stability become geometrically transparent and can be studied systematically.

To illustrate the power of this framework, we use it to derive general quantitative answers to the following questions:
\begin{itemize}
\item[\textbf{(i)}] Suppose that we are given an exact or approximate prediction for the dispersion relation $i\omega=f(ik)$ of an isolated mode, but do not know for which values of $ik$ this prediction ceases to be reliable. Can causality alone be used to constrain its domain of validity?

\item[\textbf{(ii)}] What is the maximal deviation from time dilation that an isolated mode can undergo under Lorentz boosts?

\item[\textbf{(iii)}] Consider two modes, or two distinct sets of modes, one of which relaxes more slowly than the other. What is the smallest boost velocity at which this hierarchy can break down?

\item[\textbf{(iv)}] How does causality constrain the regime of validity of hydrodynamics in boosted frames?
\end{itemize}
The answers we will provide apply universally to relativistic theories with purely relaxational spectra, including all consistent relativistic kinetic theories, theories of transient hydrodynamics, and causal viscoelastic media.

Throughout this work, we adopt the metric signature $(-,+,...,+)$ and use natural units, $c{=}\hbar{=}k_B{=}1$. While our analysis applies to theories in $D{+}1$ dimensions, we restrict attention to modes and boosts directed along the $x$ axis.

\section{Some background: The long quest for relativistic geometry in matter}

The idea that the geometric principles of relativity should have a counterpart in theories of matter has a long history. In continuum mechanics, this connection emerges naturally from causality: the requirement that signals propagate within finite causal cones usually leads to hyperbolic equations of motion \cite{Israel_Stewart_1979,Jou_Extended,Muller_book}, whose characteristic surfaces \cite{rauch2012partial} endow material media with an effective Lorentzian geometry analogous to the geometry of spacetime itself \cite{Christodoulou2007,DisconziAcoustic_2022}.

To illustrate this idea, consider a scalar field $\phi$ obeying
$\mathcal{N}^\mu \nabla_\mu \phi
+
\mathcal{M}^{\mu\nu}\nabla_\mu\nabla_\nu\phi
=0$,
where $\mathcal{N}^\mu$ and $\mathcal{M}^{\mu\nu}=\mathcal{M}^{\nu\mu}$ are background tensors. Hyperbolicity requires that the quadratic form $\mathcal{M}^{\mu\nu}$ have Lorentzian signature $(-,+,+,+)$. The inverse matrix $(\mathcal{M}^{-1})_{\mu\nu}$ may then be interpreted as an effective Lorentzian metric, whose null cone bounds the propagation of disturbances. The resulting ``acoustic geometry'' endows the theory with a causal structure whose properties can be investigated using many of the standard techniques developed for general relativity \cite{Hawking1973,Babichev:2007dw}.

This viewpoint has proven very successful for the study of stability, well-posedness, and signal propagation in fluids \cite{DisconziAcoustic:2023rtt,GavassinoSuperlum2021}. In some contexts, such as kinetic theory and ideal-fluid dynamics, the resulting geometry can also be directly related to physics \cite{DudyinskiCausality,Christodoulou2007}. In effective field theories, however, the situation is more subtle, as gradient corrections are organized perturbatively, while the characteristics are determined by the highest derivatives retained in the equations, leading to truncation-dependent geometries. For example, in modern formulations of relativistic dissipative hydrodynamics such as Bemfica-Disconzi-Noronha-Kovtun (BDNK) theory \cite{Bemfica2019_conformal1,BemficaDNDefinitivo2020}, the characteristic geometry depends on the choice of hydrodynamic frame \cite{Kovtun2019,GavassinoInitialEFT}, namely on how the effective variables are defined away from equilibrium, and is therefore not uniquely determined by the properties of the underlying medium.

A second arena in which one might hope to uncover geometric traces of relativity is the linear excitation spectrum. Given a homogeneous equilibrium state, one may consider perturbations $\propto e^{ikx-i\omega t}$, and study the set of pairs $(\omega,k)$ for which such plane waves solve the linearized equations of motion. Historically, most attention has been focused on waves with real wavenumber $k$, corresponding to Fourier modes (the building blocks of square-integrable solutions).
Within this framework, a longstanding goal was to derive relativistic constraints on individual dispersion relations $\omega(k)$. The most famous example is the proposed inequality
\begin{equation}
\left|
\frac{d\mathfrak{Re}(\omega)}{dk}
\right|
\leq 1
\qquad\qquad
(\text{standard group-velocity bound}),
\end{equation}
which was once expected to provide a natural spectral counterpart of the lightcone. This expectation was later shown to be incorrect \cite{Sommerfeld1914,Brillouin1914,AharonovKomarSussking1969}. Even the simplest causal dissipative model, namely Cattaneo's theory $\partial_t\phi=(\partial_x^2-\partial_t^2)\phi$ \cite{cattaneo1958}, possesses wavenumbers where the group velocity diverges \cite{Pu2010}. Subsequent attempts to rescue the idea by constraining only the asymptotic behavior of dispersion relations \cite{KrotscheckKundt1978} were likewise found to fail in general. More recently, it was shown that any function $\omega(k)$ with real $k$ and $\mathfrak{Im}\,\omega\leq0$ can arise as an exact excitation branch of some causal kinetic theory \cite{GavassinoHowAcausalEquaitons:2026fil}. Hence, no relativistic geometry appears to be encoded in individual real-$k$ dispersion relations.

The main limitation of focusing on Fourier modes is that, in dissipative systems, equilibration requires $\omega$ to be complex, so if we force $k$ to remain real, we end up treating $k$ and $\omega$ (and therefore space and time) on unequal footing. This asymmetry is reflected in the fact that Lorentz boosts do not map Fourier spectra into Fourier spectra, since $k'=\gamma(k-v\omega)$, meaning that a mode with real $k$ and complex $\omega$ is generally mapped to one with complex $k'$ \cite{GavassinoSuperlum2021}.
Allowing $k$ to be complex completely removes this asymmetry issue, and has already proven much more fruitful (despite being a comparatively recent idea), as we detail below.

Consider the decomposition $e^{ikx-i\omega t}
=
e^{i(x\mathfrak{Re}k-t\mathfrak{Re}\omega)}
e^{-(x\mathfrak{Im}k-t\mathfrak{Im}\omega)}$.
The first factor is an oscillatory term, and there is no obvious causal bound on how fast a variable can locally oscillate. By contrast, the second factor controls the magnitude of the perturbation, which is directly constrained by the combination of causality and stability. In fact,
suppose that $\mathfrak{Im}k>0$. Then, the factor $e^{-x\mathfrak{Im}k}$ describes a tail that decays exponentially as $x\rightarrow+\infty$. Consider now the null worldline $x=t$. A probe traveling along this worldline cannot be influenced by the behavior of the initial data at $x<0$, being causally disconnected from it. In particular, the probe cannot ``know'' that the perturbation diverges exponentially as $x\rightarrow-\infty$, since the perturbation could as well be a localized wavepacket, based on the probe's limited knowledge. Therefore, if the theory is causal and stable, the amplitude measured along $x=t$ cannot itself diverge exponentially at late times. Evaluating the magnitude factor along the null ray gives $e^{-t(\mathfrak{Im}k-\mathfrak{Im}\omega)}$, which remains bounded only if $\mathfrak{Im}\omega\leq\mathfrak{Im}k$. Repeating the argument for $\mathfrak{Im}k<0$ yields the inequality
\begin{equation}
\mathfrak{Im}\omega\leq |\mathfrak{Im}k|  
\qquad\qquad
(\text{covariant stability bound}).
\label{covstabhydrobound}
\end{equation}
This bound is particularly useful in relativity, because it is Lorentz invariant.
Indeed, it lies at the heart of the modern stability-causality theorem \cite{GavassinoSuperlum2021} and of the first universal rigorous bounds on relativistic transport coefficients \cite{HellerBounds2022ejw,GavassinoBounds2023myj,HellerHydrohedron2023jtd}.

In this article, we show that \eqref{covstabhydrobound} is the first instance of a more general geometric structure, which becomes apparent if one focuses specifically on situations where $k$ and $\omega$ are both imaginary. In this setup, the real pairs $(i\omega,ik)$ live on a plane in which previously unrecognized patterns become apparent, and many novel results can be derived and organized in a rigorous and systematic fashion.

\newpage
\section{Geometry of the relaxation plane}
\vspace{-0.3cm}

Our starting point is the observation that the pair $(i\omega,ik)$, assumed throughout this article to be real, transforms as a Lorentz vector. Indeed, boosting the exponential factor $e^{ikx-i\omega t}$ yields a new spacetime dependence of the form $e^{ik'x'-i\omega't'}$, with
\begin{equation}\label{boost}
\begin{cases}
t=\gamma(t'+vx') \, , \\
x=\gamma(x'+vt') \, ,
\end{cases}
\qquad \Longrightarrow \qquad
\begin{cases}
i\omega=\gamma(i\omega'+vik') \, , \\
ik=\gamma(ik'+vi\omega') \, ,
\end{cases} \qquad\qquad \left(\text{with } \gamma=\dfrac{1}{\sqrt{1-v^2}}\right) \, .
\end{equation}
The dispersion plane $\{i\omega,ik\}$ therefore inherits a Lorentzian structure analogous to that of Minkowski spacetime. In particular, wavevectors $(i\omega,ik)$ can be classified according to the sign of the invariant norm $(i\omega)^2{-}(ik)^2$.

If a wavevector is timelike, then one can boost to a frame in which the corresponding exponential factor takes the homogeneous form $e^{-i\omega' t'}$. If $i\omega'>0$, namely if the wavevector is future-directed, the mode describes homogeneous relaxation in that frame, and we classify it as \textit{relaxation-like}. If instead $i\omega'<0$, corresponding to a past-directed wavevector, the profile grows exponentially in time. The medium therefore appears unstable in that frame \cite{Hiscock_Insatibility_first_order}, and we classify the mode as \textit{unstable-like}.
If, on the other hand, the wavevector is spacelike, one can boost to a frame in which the exponential factor takes the stationary form $e^{ik'x'}$. This corresponds to a planar profile that decays exponentially in space. Such profiles arise, for example, in skin-depth problems and in stationary flows past obstacles \cite{NovakWithersObstacles:2018pnv}, and we therefore classify these modes as \textit{evanescent-like}. This classification is illustrated in Fig. \ref{fig:DiffAndKot} (left panel).

To develop some intuition for this construction, let us consider a few examples. Take the diffusion equation $\partial_t \phi\,{=}\,\partial_x^2\phi$,
whose dispersion relation is $i\omega=-(ik)^2$. The invariant norm of the corresponding wavevector is $(i\omega)^2{-}(ik)^2=(ik)^2\big[(ik)^2{-}1\big]$.
At small $ik$, the norm is negative, and the mode is therefore evanescent-like. However, for $|ik|>1$, the norm becomes positive. Since $i\omega<0$, the corresponding modes are unstable-like. Hence, we recover the well-known fact that diffusion becomes unstable in some boosted frame, without explicitly performing any boost transformation.
The geometric picture immediately reveals additional information. By plotting the curve $i\omega=-(ik)^2$ in the dispersion plane (Fig. \ref{fig:DiffAndKot}, right panel), one sees that every boosted observer detects an unstable mode. Moreover, the geometric construction makes it evident that the growth rate of this unstable mode increases as the boost velocity \textit{decreases}, diverging as one approaches the rest frame.

As a second example, consider the linearized Korteweg--de Vries equation $\partial_t\phi=c_s\partial_x\phi-\partial_x^3\phi$ (with $c_s^2<1$),
whose dispersion relation is $i\omega=-c_s(ik)+(ik)^3$.
This equation provides a simple model for dispersive wave propagation in shallow water \cite{Korteweg1895}. At small wavenumbers, the linear term dominates, and the modes are evanescent-like. At sufficiently large $|ik|$, however, the cubic term causes the vector $(i\omega,ik)$ to become timelike. For large positive $ik$, one has $i\omega>0$, and the corresponding modes are therefore relaxation-like. By contrast, for large negative $ik$, one finds $i\omega<0$, so the modes become unstable-like. Observers moving in the positive-$x$ direction at any non-vanishing speed therefore detect an instability (Fig. \ref{fig:DiffAndKot}, right panel).

\begin{figure}[b!]
    \centering
\includegraphics[width=0.36\linewidth]{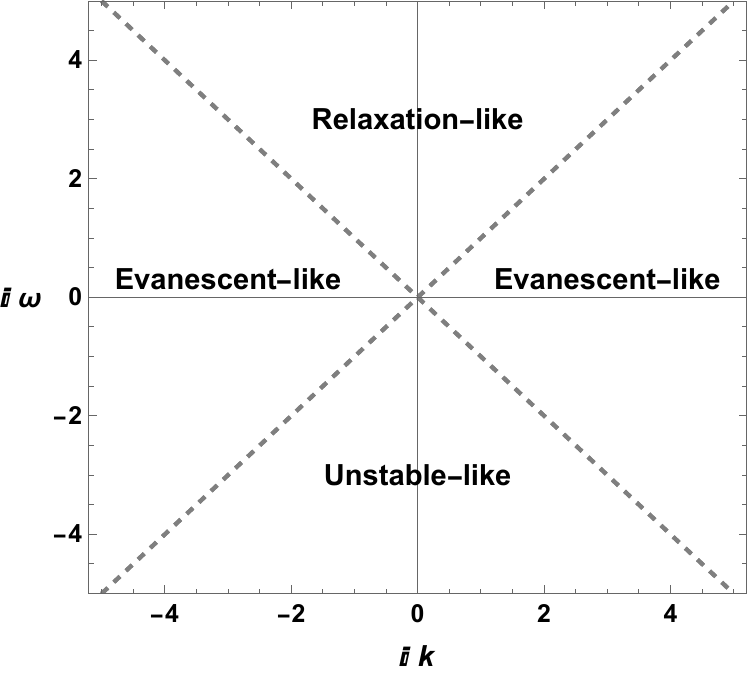}\hspace{0.08\linewidth}
\includegraphics[width=0.36\linewidth]{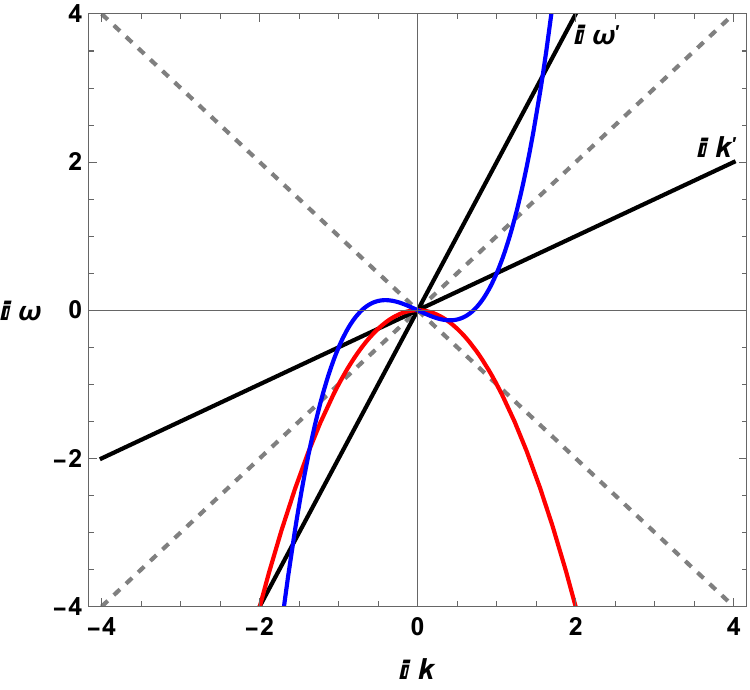}
\caption{\textit{Left panel}: Geometric classification of modes. The bound \eqref{covstabhydrobound} implies that there are no unstable-like modes in a causal and stable system.
\textit{Right panel}: Dispersion relations of the diffusion equation (red) and of the Korteweg--de Vries equation with $c_s=\frac{1}{2}$ (blue) represented in the $\{i\omega,ik\}$ plane. The intersections between these curves and the $i\omega'$ axis of a boosted observer correspond to spatially homogeneous modes that either decay (relaxation-like) or grow (unstable-like) in that observer's frame. In both examples, right-moving observers detect a growing mode, whose growth rate diverges as $v\to0$ (since the intersection happens at infinity). Intersections with the $ik'$ axis correspond to stationary evanescent modes, which describe spatially decaying, time independent profiles such as wakes and bow waves near obstacles comoving with the observer.}
    \label{fig:DiffAndKot}
\end{figure}

\newpage

Let us now consider two examples of causal theories whose spectra contain no unstable-like modes.

As a first example, consider Cattaneo's \cite{cattaneo1958} causal theory of diffusion, $\partial_t \phi=(\partial^2_x-\partial^2_t)\phi$ (see Fig.~\ref{fig:CattAndIS}, left panel), whose dispersion relation satisfies $(i\omega)^2-(ik)^2=i\omega$. Whenever the invariant norm is positive, one necessarily has $i\omega>0$. Timelike modes are therefore always future-directed, and hence relaxation-like. Conversely, all modes with $i\omega<0$ are necessarily evanescent-like. The theory thus contains no unstable modes.

As a second example, we consider Israel--Stewart theory \cite{Israel_Stewart_1979} for shear-viscous sound waves at zero chemical potential, which expressed in terms of a single field reads $(\partial^2_t{-}c_s^2 \partial^2_x)\phi+\partial_t(\partial^2_t{-}\partial^2_x)\phi=0$ (see Fig.~\ref{fig:CattAndIS}, right panel) \cite{Pu2010}. Its dispersion relation obeys $(i\omega)^2-(ik)^2=(1-c_s^2)(ik)^2/(i\omega-1)$. Thus,  positivity of the invariant norm implies $i\omega>1$, so that timelike modes are again necessarily future-directed and therefore relaxation-like. Modes with $i\omega<1$ are instead evanescent-like. Once more, no unstable-like modes are present. 

\begin{figure}[h!]
    \centering
\includegraphics[width=0.36\linewidth]{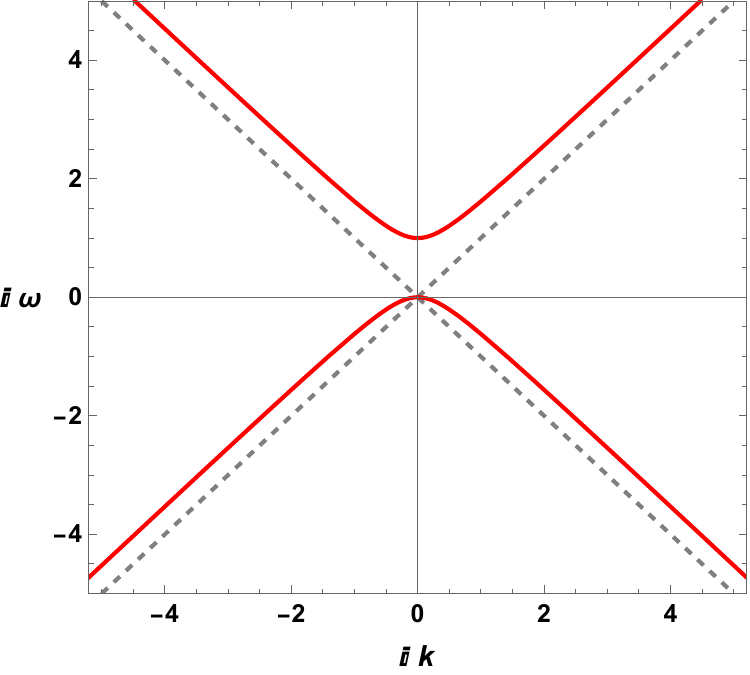}\hspace{0.08\linewidth}
\includegraphics[width=0.36\linewidth]{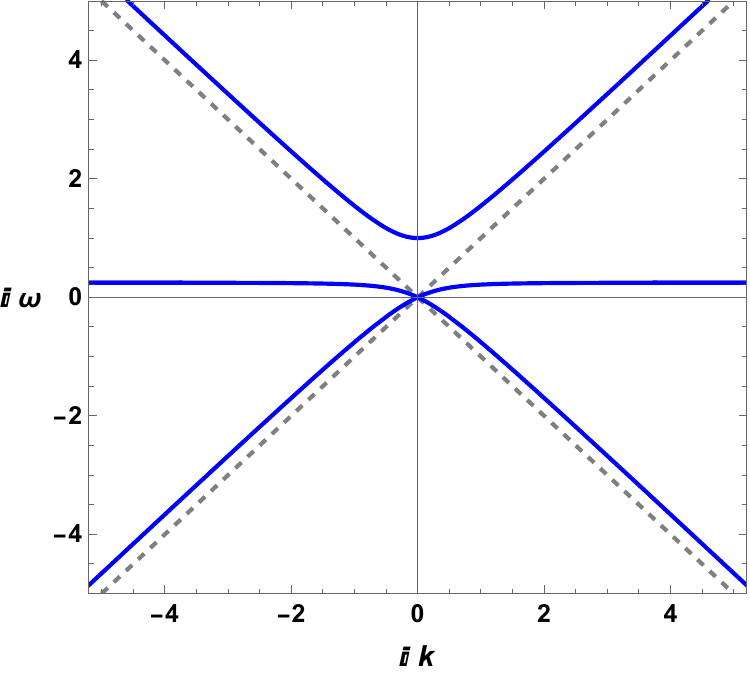}
\caption{Dispersion relations of two causal and covariantly stable theories: Cattaneo diffusion,
$\partial_t \phi+(\partial_t^2-\partial_x^2)\phi=0$
(left panel), and Israel--Stewart viscous sound,
$(\partial_t^2-c_s^2\partial_x^2)\phi+\partial_t(\partial_t^2-\partial_x^2)\phi=0$
(right panel), designed so that signals propagate at the speed of light. In both cases, the hydrodynamic modes are evanescent-like,
while the non-hydrodynamic modes are relaxation-like.}
    \label{fig:CattAndIS}
\end{figure}

\subsection{Milne and Rindler coordinates}

A natural way to extract direct physical meaning from the relaxation plane is to represent its wedges in hyperbolic coordinates. In particular, the relaxation-like sector is naturally parametrized by Milne coordinates \cite{RomatschkeReview:2017ejr}:
\begin{equation}
\begin{cases}
s =\sqrt{(i\omega)^2-(ik)^2} \, , \\
v =ik/i\omega \, ,
\end{cases}
\qquad \Longleftrightarrow \qquad
\begin{cases}
i\omega=s \gamma \, , \\
ik = s \gamma v \, .
\end{cases}
\end{equation}
\begin{figure}[h!]
    \centering
\includegraphics[width=0.37\linewidth]{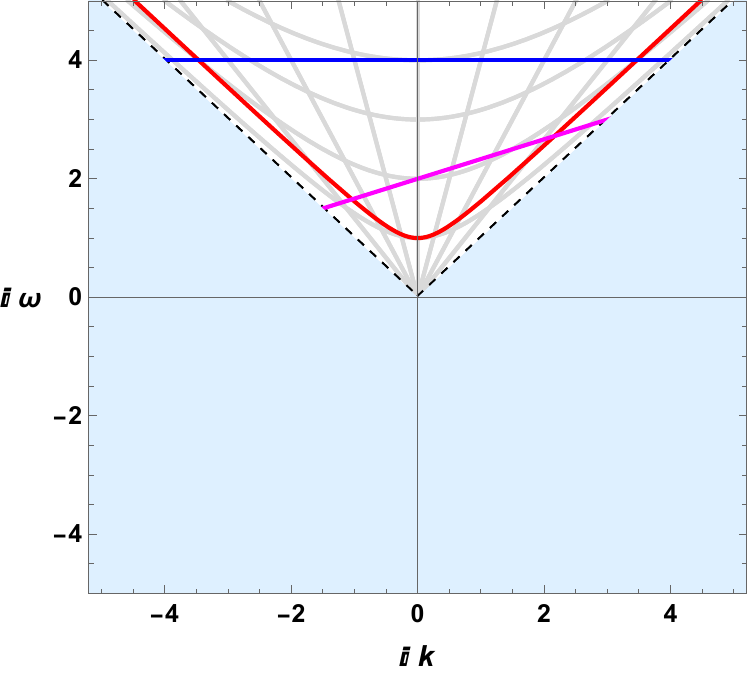}\hspace{0.08\linewidth}
\includegraphics[width=0.35\linewidth]{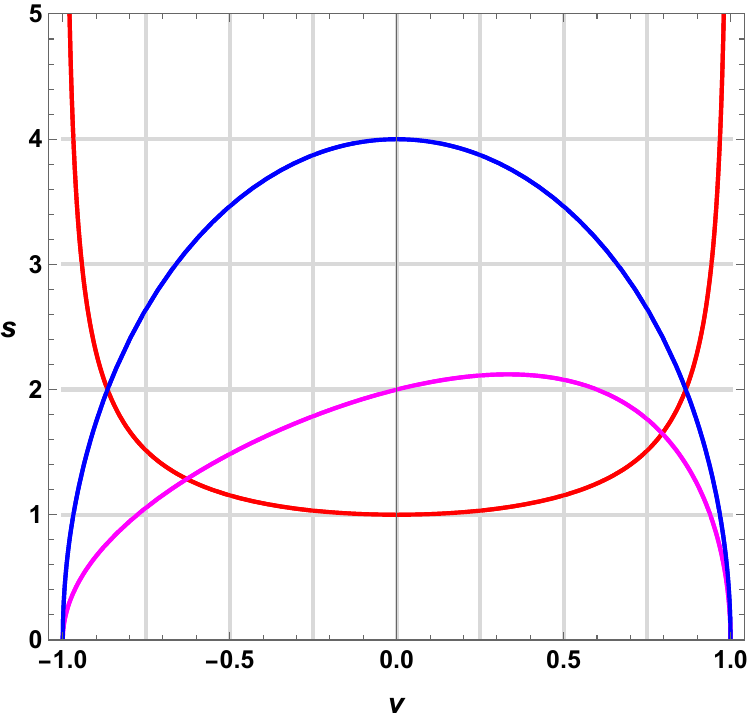}
\includegraphics[width=0.37\linewidth]{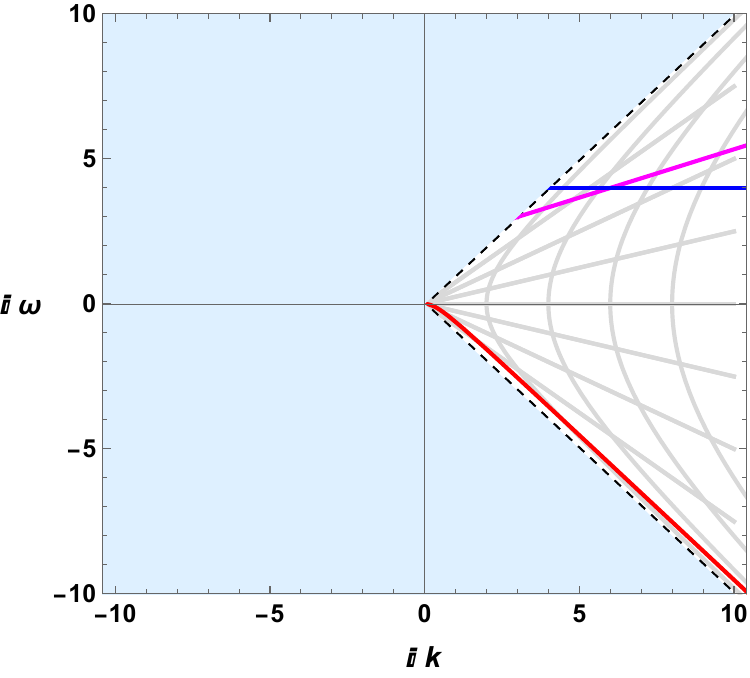}\hspace{0.08\linewidth}
\includegraphics[width=0.35\linewidth]{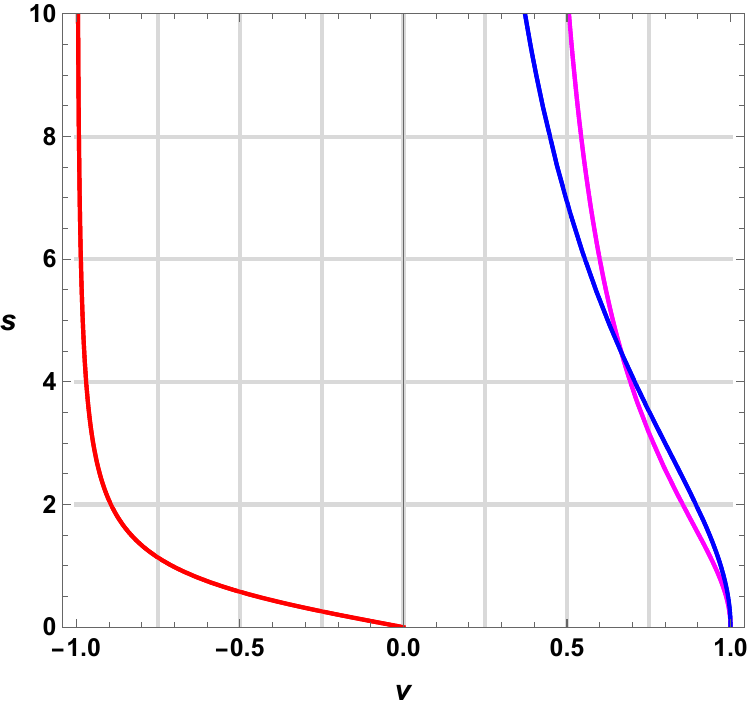}
\includegraphics[width=0.37\linewidth]{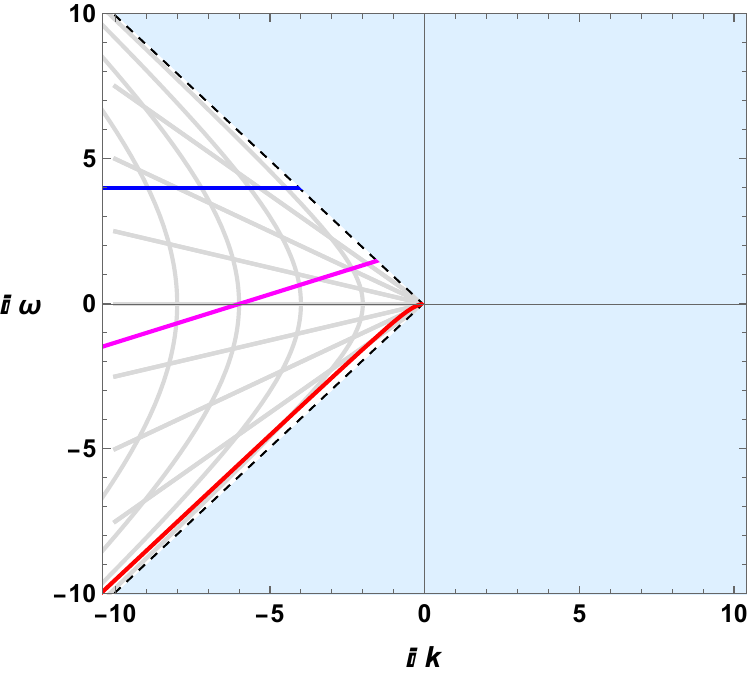}\hspace{0.08\linewidth}
\includegraphics[width=0.35\linewidth]{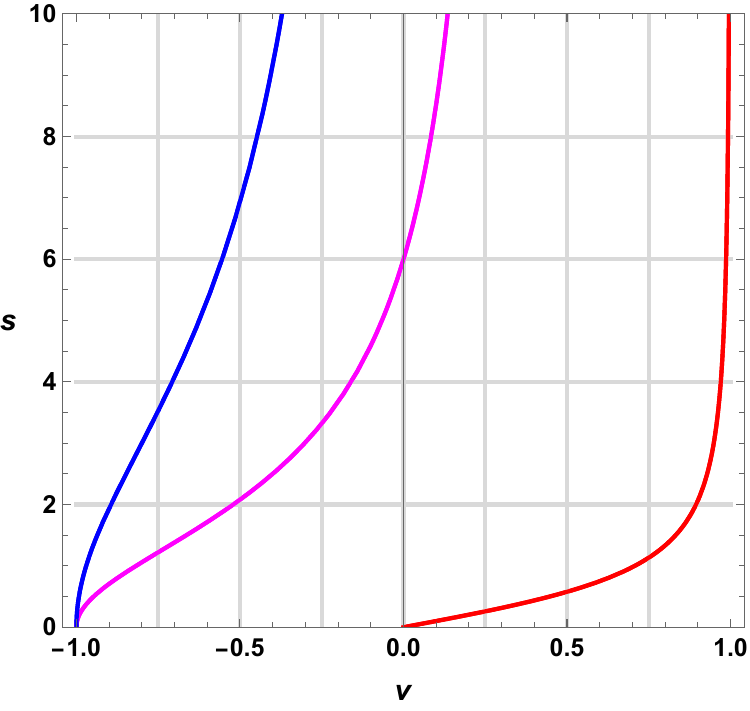}
\caption{Representation of the relaxation-like sector in Milne coordinates (upper panels) and of the evanescent-like sectors in Rindler coordinates (middle and lower panels). For illustration, we plot the dispersion relations of three simple models: a non-propagating relaxation mode, $\partial_t\phi=-4\phi$ (blue); a transport-relaxation model propagating to the right, $\partial_t \phi=-2\phi-\frac{1}{3}\partial_x \phi$ (magenta); and  Cattaneo's theory of causal diffusion, $\partial_t^2\phi+\partial_t\phi=\partial_x^2 \phi$ (red).
In Milne coordinates, the first model is seen to undergo pure Lorentz time dilation, $s\propto 1/\gamma$, reflecting the fact that the corresponding excitation is advected by the medium. By contrast, the telegraph model describes an excitation whose propagation speed approaches the speed of light at large $ik$, leading to an arbitrarily large blueshift as $v\to1$.
The Rindler wedges encode the stationary evanescent modes (wakes and bow waves) generated by obstacles moving at velocity $v$ (see Appendix~\ref{aaa}). The first model exhibits a wake but no bow wave, while the third exhibits a bow wave but no wake. The second model transitions from a bow wave to a wake at $v=1/3$, corresponding to the propagation velocity of the excitation.}
    \label{fig:MilneRindler}
\end{figure}
Here, $v$ is the velocity of the observer in whose frame the mode becomes spatially homogeneous\footnote{Strictly speaking, the standard Milne coordinates are obtained by introducing a rapidity $\eta$ through $v=\tanh\eta$. However, working directly with $v$ is more convenient for our purposes. Similarly for the Rindler parametrization.}, while $s$ is the relaxation rate measured by that observer. This interpretation follows directly from substituting the Milne parametrization into the exponential factor: $e^{ikx-i\omega t}
=
e^{(s\gamma v)x-(s\gamma)t}
=
e^{-s\gamma(t-vx)}
=
e^{-st'}$. Thus, Milne coordinates on the relaxation plane naturally parametrize homogeneous relaxation spectra as the observer is varied (see Fig. \ref{fig:MilneRindler}, upper panels).

Similarly, the evanescent-like sectors can be naturally parametrized using Rindler coordinates \cite{Gron2010}:
\begin{equation}
\begin{cases}
s =\sqrt{(ik)^2-(i\omega)^2} \, , \\
v =i\omega/ik \, ,
\end{cases}
\qquad \Longleftrightarrow \qquad
\begin{cases}
i\omega= \pm s \gamma v \, , \\
ik =\pm s \gamma \, ,
\end{cases}
\end{equation}
where the upper sign corresponds to the right sector and the lower sign to the left sector. Here, $v$ is the velocity of the observer in whose frame the mode becomes stationary, while $s$ determines the spatial decay rate in that frame. In fact, $e^{ikx-i\omega t}
=
e^{(\pm s\gamma)x-(\pm s\gamma v)t}
=
e^{\pm s\gamma(x-vt)}
=
e^{\pm sx'}$. 
Thus, Rindler coordinates naturally parameterize stationary evanescent modes as the observer is varied (see Fig.~\ref{fig:MilneRindler}, middle and lower panels).

\newpage
\section{Causal propagation of the spectrum on the relaxation plane}
\vspace{-0.2cm}

Having developed a global geometric picture of the relaxation plane, we now state our main result. As the parameter $ik$ is varied, the spectrum sweeps out subsets of the relaxation plane. We will show that, in any consistent kinetic theory  (and more generally in theories with analogous mathematical structure, including radiative transfer, transient hydrodynamics, and phenomenological viscoelasticity\footnote{In section \ref{Un9versalityGeometric}, we will show that essentially any causal theory with purely relaxational spectrum can be recast in a kinetic-like form. In other words, the structural assumptions introduced here are equivalent to demanding that $i\omega$ is always real whenever $ik$ is real.}) causality constrains this motion to follow spacelike trajectories. This yields remarkably strong geometric constraints on the mode structure. As an illustration of their consequences, we will then derive rigorous and universal answers to questions \textbf{(i,ii,iii,iv)} posed in the introduction.

\vspace{-0.2cm}
\subsection{Model assumptions}
\vspace{-0.2cm}

Let $\Psi:\text{``Minkowski''}\to \mathcal{H}$ be a linearized classical field describing perturbations, taking values in a complex Hilbert space $\mathcal{H}$, endowed with an inner product $(*,*)$. Physically, $\mathcal{H}$ is the space of all allowed internal states of a material element at a fixed point in spacetime. In transient hydrodynamics, $\Psi$ is a collection of, say, $N$ linearized hydrodynamic variables, so that $\mathcal{H}=\mathbb{C}^N$. In kinetic theory, $\Psi(t,x)$ is the linearized perturbation of the distribution function at the spacetime point $(t,x)$, and therefore $\mathcal{H}=L^2(\mathcal{M})$, where $\mathcal{M}$ denotes the manifold of allowed (quasi)particle momenta. We assume that the dynamics is governed by the following Boltzmann-like equation:
\begin{equation}\label{Boltzmann}
\partial_t\Psi=-\s \Psi-\E \partial_x \Psi \, ,
\end{equation}
where $\s$ and $\E$ are self-adjoint operators on $\mathcal{H}$, with $\E$ bounded. Here, $\s$ encodes equilibration (in kinetic theory, the collision integral), while $\E$ encodes transport (in kinetic theory, the particle velocity operator). Their self-adjointness generically follows from Onsager reciprocity together with $PT$ symmetry \cite{GavassinoNonHydro2022,GavassinoSymmetric2022nff,GavassinoUniveraalityI2023odx,RochaGavassinoFlucut:2024afv,GavassinoConvergence:2024xwf,GavassinoDisturbing:2026klp}, and constitutes the key structural property underlying the Lorentzian geometry of the relaxation plane. In particular, it guarantees that the spectrum is ``purely relaxational'', in the sense that $i\omega\in \mathbb{R}$ whenever $ik\in \mathbb{R}$. In fact, assuming $\Psi\,{\propto}\, e^{ikx-i\omega t}$, equation \eqref{Boltzmann} becomes
\begin{equation}\label{boltzmannFourier}
(\s+ik\E)\Psi=i\omega\Psi \, ,
\end{equation}
so that the set of admissible modes $i\omega$ coincides\footnote{In many contexts, a pair $(i\omega,ik)$ is said to represent an ``excitation mode'' if the Fourier transform of some retarded correlator is singular at $(i\omega,ik)$ \cite{HellerBounds2022ejw}. In Appendix \ref{bbb}, we prove that this condition is equivalent to requiring that $i\omega$ belong to the spectrum of $\s_{ik}$.} with the spectrum of the operator $\s_{ik}=\s+ik\E$. Since $\s_{ik}$ is self-adjoint for $ik\in \mathbb{R}$, we conclude that $i\omega$ is necessarily real.

The boundedness of $\E$ is a consequence of causality. In fact, the spectrum of $\E$ determines the characteristic propagation speeds of the medium \cite{DudynskiEkielJezewska1985,DudyinskiCausality,GavassinoDisturbing:2026klp}, and causality amounts to the statement that the operator norm $w=||\E||$ does not exceed $1$ (in kinetic theory, this corresponds to quasiparticles traveling subluminally). A formal derivation of this operator bound is provided in Appendix \ref{ccc}. 

\vspace{-0.2cm}
\subsection{Main theorems}\label{mainth}
\vspace{-0.2cm}

Consider two wavenumbers $ik_1$ and $ik_2$. The corresponding operators $\s_{ik_1}$ and $\s_{ik_2}$ differ by the bounded self-adjoint operator $(ik_2-ik_1)\E$, whose norm is $|ik_2-ik_1|w$. A standard theorem from the perturbation theory of linear operators \cite[\S 5.4.3., Th. 4.10]{Kato_Perturbation_Theory} then implies that the spectra of $\s_{ik_1}$ and $\s_{ik_2}$ must lie within a distance $|ik_1-ik_2|w$ of one another. Hence, we obtain the following result.

\begin{theorem}\label{theo1}
Let $\mathcal{S}_{ik}=\textup{Spectrum}(\s+ik\E)$ denote the set of modes of the theory \eqref{Boltzmann} at fixed $ik$, and let $w=||\E||\leq 1$ be the maximal propagation speed. Then, for any $ik_1,ik_2\in \mathbb{R}$, we have that $\textup{dist}(\mathcal{S}_{ik_1},\mathcal{S}_{ik_2})\leq |ik_1-ik_2|w$, with $\textup{dist}$ the Hausdorff distance.
In particular, if $i\omega_1\in \mathcal{S}_{ik_1}$, then there exists $i\omega_2\in \mathcal{S}_{ik_2}$ such that
\begin{equation}\label{bound}
|i\omega_1-i\omega_2|\leq w|ik_1-ik_2| \, .
\end{equation}
\end{theorem}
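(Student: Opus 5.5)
The plan is to reduce the statement to the stability of the spectrum of a self-adjoint operator under a bounded self-adjoint perturbation. Write $A=\s_{ik_1}$ and $B=\s_{ik_2}=A+\Delta$ with $\Delta=(ik_2-ik_1)\E$. Since $\E$ is bounded and self-adjoint and $ik_2-ik_1$ is real, $\Delta$ is bounded and self-adjoint with $\|\Delta\|=|ik_1-ik_2|\,w$. Because $\Delta$ is bounded, $A$ and $B$ share the same domain (that of $\s$). Both are self-adjoint on it, by the Kato--Rellich theorem with relative bound zero, or directly because bounded symmetric perturbations preserve self-adjointness. Their spectra $\mathcal{S}_{ik_1},\mathcal{S}_{ik_2}$ are therefore nonempty closed subsets of $\mathbb{R}$.

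The key step is the resolvent estimate for self-adjoint operators. For any real $\lambda\notin\mathcal{S}_{ik_2}$, the spectral theorem gives $\|(B-\lambda)^{-1}\|=1/\mathrm{dist}(\lambda,\mathcal{S}_{ik_2})$. Suppose $i\omega_1\in\mathcal{S}_{ik_1}$ but $d:=\mathrm{dist}(i\omega_1,\mathcal{S}_{ik_2})>\|\Delta\|$. Then write
\begin{equation*}
A-i\omega_1=(B-i\omega_1)\bigl[1-(B-i\omega_1)^{-1}\Delta\bigr] \, .
\end{equation*}
On the domain the factor in brackets is a bounded operator of the form $1-K$ with $\|K\|\le\|\Delta\|/d<1$, so it is invertible by a Neumann series. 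Hence $A-i\omega_1$ has bounded inverse $[1-K]^{-1}(B-i\omega_1)^{-1}$, which contradicts $i\omega_1\in\mathcal{S}_{ik_1}$. We conclude $d\le\|\Delta\|$. Since $\mathcal{S}_{ik_2}$ is closed, the infimum is attained, so there exists $i\omega_2\in\mathcal{S}_{ik_2}$ with $|i\omega_1-i\omega_2|\le w|ik_1-ik_2|$, which is \eqref{bound}. This argument is exactly the content of the cited Kato theorem, and one may simply invoke it.

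Swapping the roles of $ik_1$ and $ik_2$ gives the symmetric statement: every point of $\mathcal{S}_{ik_2}$ lies within $\|\Delta\|$ of $\mathcal{S}_{ik_1}$. Together the two one-sided bounds give Hausdorff distance at most $|ik_1-ik_2|w$.

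The only delicate point is the unbounded case, where $\s$ is merely self-adjoint, as for the collision operator. There one must check that the factorization above holds on the correct domain and that the inverse really maps onto $\mathrm{Dom}(\s)$. Both follow because $\Delta$ is bounded, so $A$ and $B$ have the same domain. The norm identity $\|(B-\lambda)^{-1}\|=1/\mathrm{dist}(\lambda,\sigma(B))$ still holds for unbounded self-adjoint $B$ by functional calculus. Nonemptiness of the spectra is automatic for self-adjoint operators, so the Hausdorff distance is well defined.
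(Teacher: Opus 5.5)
Your proof is correct and takes the same route as the paper. The paper also treats $\s_{ik_2}=\s_{ik_1}+(ik_2-ik_1)\E$ as a bounded self-adjoint perturbation of norm $|ik_1-ik_2|\,w$ and invokes Kato's \S 5.4.3, Theorem 4.10; your resolvent-norm identity, Neumann-series factorization, and swap of $ik_1$ and $ik_2$ for the two-sided Hausdorff bound are exactly the proof of that cited theorem.
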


Geometrically, this result implies that, as $ik$ varies, the region of the relaxation plane covered by the spectrum expands at a rate no greater than $w$ (see Fig.~\ref{fig:Propagazia} for representative examples).

The geometric picture can be sharpened even further. Let $i\omega_1$ be an isolated point of $\mathcal{S}_{ik_1}$ with finite degeneracy, and let $d$ denote its distance from the remainder of the spectrum. Analytic perturbation theory \cite{Kato1949_Perturbation_I} implies that, even if splitting were to occur, the resulting spectral cluster remains isolated from the rest of the spectrum throughout the interval $ik\in (ik_1-\frac{d}{2w},ik_1+\frac{d}{2w})$. Within this interval, the cluster there defines a finite collection of dispersion relations $i\omega(ik)$ that are analytic around $ik_1$ (with radius of convergence at least $d/(2w)$ in the non-degenerate case). Specializing \eqref{bound} to the limit $ik_2\to ik_1$, we obtain the following result (see also \cite[\S 7.3.4, Eq. (3.19)]{Kato_Perturbation_Theory}).

\begin{theorem}\label{theo2}
Let $i\omega_1\in \mathcal{S}_{ik_1}$ be a mode of \eqref{Boltzmann} that is separated from the rest of the spectrum at $ik_1\in \mathbb{R}$. If $w=||\E||\leq 1$, then the tangent vector to every dispersion relation $(i\omega(ik),ik)$ crossing $(i\omega_1,ik_1)$ is evanescent-like at $ik_1$. Specifically,
\begin{equation}\label{boundtheo2}
\bigg|\dfrac{d(i\omega)}{d(ik)}\bigg|_{ik=ik_1}\leq w\, .
\end{equation}
\end{theorem}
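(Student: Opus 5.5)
The plan is to combine the isolation argument stated just before the theorem with the Lipschitz bound of Theorem \ref{theo1}, applied to a single branch rather than to whole spectra. Fix $ik_1$ and an isolated mode $i\omega_1$ at distance $d$ from the rest of $\mathcal{S}_{ik_1}$; its degeneracy is finite by the hypothesis used there. Since $\s_{ik}=\s_{ik_1}+(ik-ik_1)\E$ is a self-adjoint holomorphic family of type (A) with bounded perturbation, Kato--Rellich theory applies. For $|ik-ik_1|<d/(2w)$, the part of the spectrum inside the disk of radius $d/2$ about $i\omega_1$ consists of finitely many eigenvalues. These can be organized into real-analytic branches $i\omega_j(ik)$ with $i\omega_j(ik_1)=i\omega_1$; for self-adjoint families no Puiseux branching occurs. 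The tangent vectors in the statement are the vectors $(i\omega_j'(ik_1),1)$.

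\textbf{Key step (Hellmann--Feynman).} By Rellich's theorem, the analytic branches come with analytic normalized eigenvectors $\Psi_j(ik)$, real-analytic in $ik$. Differentiate $(\s+ik\E)\Psi_j=i\omega_j\Psi_j$ and take the inner product with $\Psi_j$. Self-adjointness kills the term $((\s_{ik}-i\omega_j)\Psi_j,\Psi_j')$, leaving
\begin{equation}
\frac{d(i\omega_j)}{d(ik)}=(\Psi_j,\E\Psi_j) \, .
\end{equation}
Cauchy--Schwarz then gives $|(\Psi_j,\E\Psi_j)|\leq ||\E||=w$. This is exactly \eqref{boundtheo2}. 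Since $w\leq1$, the tangent vector $(i\omega',1)$ has nonpositive invariant norm $(i\omega')^2-1\leq0$, so it is evanescent-like (or null in the borderline case $w=1$).

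\textbf{Alternative for the degenerate case.} The analytic eigenvector choice is the main technical obstacle if $i\omega_1$ is degenerate, since the Hellmann--Feynman identity must hold branch by branch. Here I would instead avoid eigenvectors entirely and use a limit of Theorem \ref{theo1}. Apply Kato's bound to the reduced operator $P(ik)\s_{ik}P(ik)$ on the finite-dimensional range of the Riesz projection. Alternatively, note directly that each isolated branch is Lipschitz with constant $w$ near $ik_1$. For $ik_2$ close to $ik_1$, \eqref{bound} guarantees some eigenvalue within $w|ik_2-ik_1|$ of $i\omega_j(ik_1)$, which only yields one nearby point, not necessarily on the same branch. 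To pin it to the branch, I would use Weyl's eigenvalue inequality for the finite Hermitian cluster: ordered eigenvalues move by at most $w|\Delta ik|$. The analytic branches agree with the ordered eigenvalues on each side of $ik_1$ up to relabeling, which is finite and piecewise. Hence each branch satisfies $|i\omega_j(ik_2)-i\omega_j(ik_1)|\leq w|ik_2-ik_1|$. Dividing by $|ik_2-ik_1|$ and letting $ik_2\to ik_1$ gives the derivative bound, consistent with Kato's Eq. (3.19).
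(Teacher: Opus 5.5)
Your argument is correct, but your main route differs from the paper's.

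\textbf{The paper's route.} It obtains \eqref{boundtheo2} as the $ik_2\to ik_1$ limit of the Hausdorff bound of Theorem~\ref{theo1}. It uses only that the isolated cluster splits into finitely many continuous (indeed analytic) branches. This needs no eigenvectors, and it directly gives the integrated estimate $|i\omega_j(ik_2)-i\omega_1|\leq w|ik_2-ik_1|$.

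\textbf{Your route.} You use Rellich's orthonormal analytic eigenvectors and the Hellmann--Feynman identity $d(i\omega)/d(ik)=(\Psi,\E\Psi)$. This is essentially the mechanism behind the Kato estimate (Eq.~(3.19)) that the paper cites only as a cross-check. It buys an exact formula: the slope is the expectation value of the velocity operator in the mode, the same quantity that later appears as $c_s=(\Psi_+,\E\Psi_+)$. It also tells you when the bound is saturated. The price is the eigenvector machinery, for instance the question of whether $\Psi_j'$ lies in the domain of an unbounded $\s$. You can bypass that by working with the difference quotient that follows from $\big(\Psi_j(ik_1),(\s_{ik_2}-i\omega_j(ik_2))\Psi_j(ik_2)\big)=0$ together with self-adjointness of $\s_{ik_1}$.

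\textbf{The degenerate case.} Your worry there is unfounded. Rellich's theorem already supplies analytic orthonormal eigenvectors branch by branch for a finitely degenerate isolated eigenvalue, so Hellmann--Feynman applies unchanged.

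\textbf{The ``wrong branch'' issue.} The problem you raise with Theorem~\ref{theo1} disappears if you apply the Hausdorff bound in the opposite direction. The point $i\omega_j(ik_2)$ lies within $w|ik_2-ik_1|$ of some point of $\mathcal{S}_{ik_1}$. Once this distance and $|i\omega_j(ik_2)-i\omega_1|$ are both below $d/2$, isolation forces that point to be $i\omega_1$. This one line is precisely the paper's argument. It makes your Weyl-inequality detour unnecessary. That detour would also need extra justification, because the cluster's spectral subspace moves with $ik$, so there is no fixed Hermitian matrix whose ordered eigenvalues you are comparing.
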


In the early days of relativity, it was widely expected that causality would manifest itself at the level of dispersion relations through the statement that the group velocity $d\mathfrak{Re}(\omega)/dk$ does not exceed $1$. This expectation was later found to be false in general \cite{Sommerfeld1914,Brillouin1914}, being falsified by simple models such as Cattaneo's theory $\partial_t\phi=(\partial_x^2-\partial_t^2)\phi$ \cite{Pu2010}. We now see that the resolution was simply to rotate the problem into imaginary directions. Indeed, upon introducing an auxiliary variable, Cattaneo's theory can be rewritten in the universal form \eqref{Boltzmann} as $\partial_t\phi=-\partial_x J$ and $\partial_t J=-J-\partial_x\phi$. Theorem \ref{theo2} therefore applies, and bound \eqref{boundtheo2} (with $w{=}1$) is respected, as can be seen from figure \ref{fig:CattAndIS}, left panel.

\begin{figure}[h!]
    \centering
\includegraphics[width=0.36\linewidth]{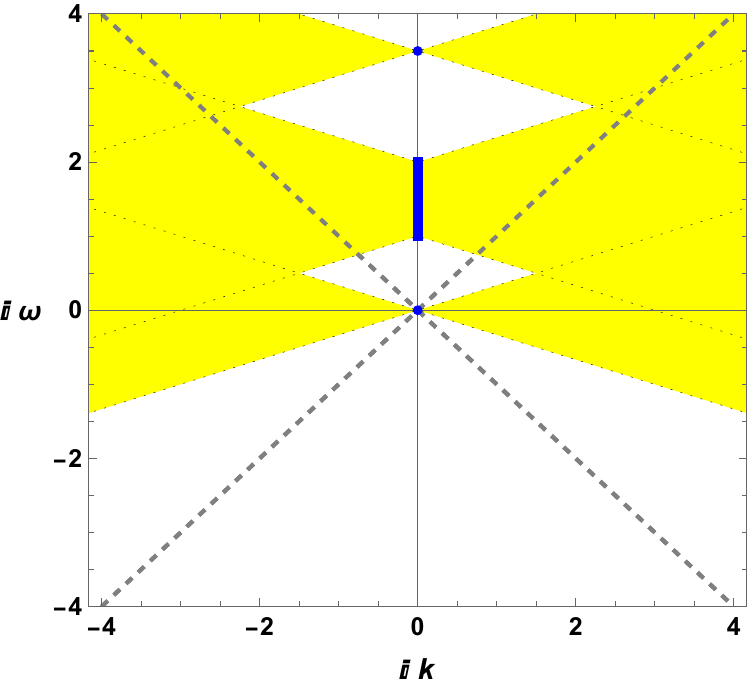}\hspace{0.08\linewidth}
\includegraphics[width=0.36\linewidth]{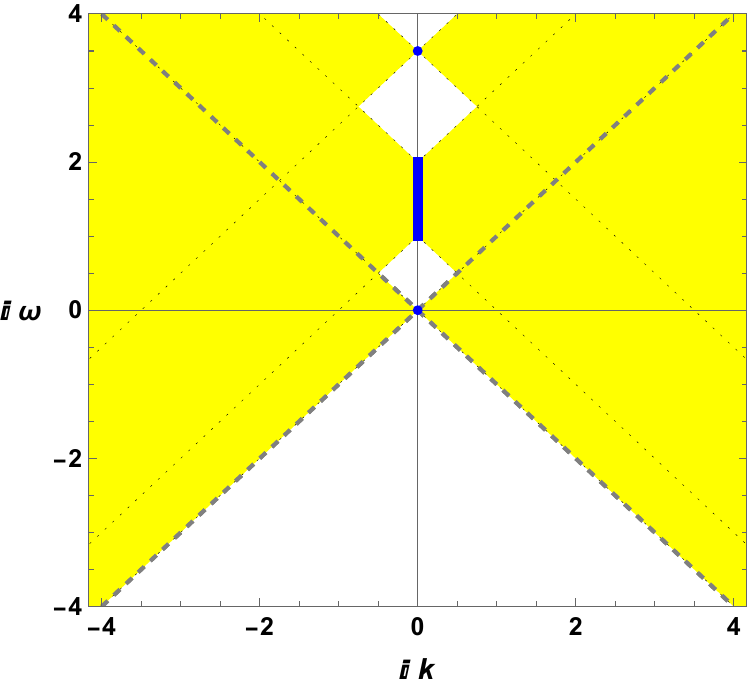}
\includegraphics[width=0.36\linewidth]{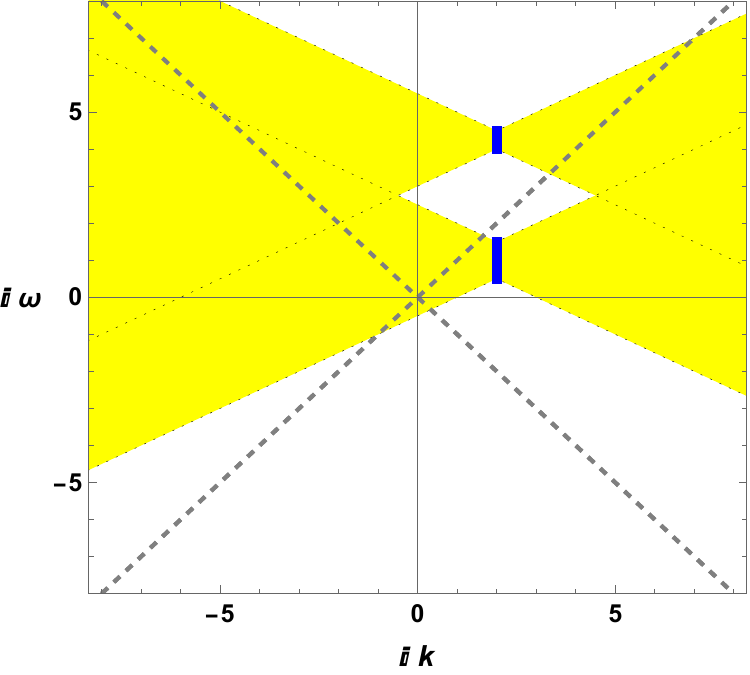}\hspace{0.08\linewidth}
\includegraphics[width=0.36\linewidth]{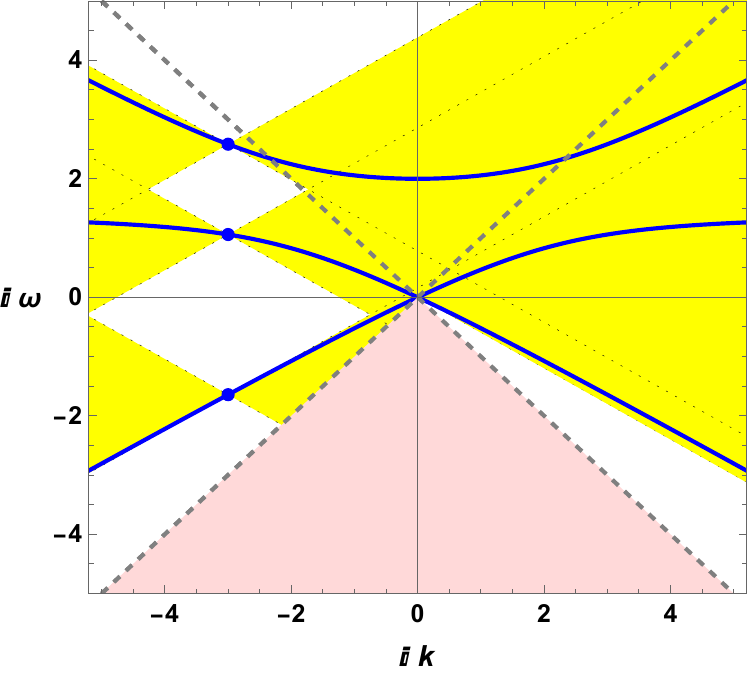}
\caption{Some simple illustrations of how Theorem \ref{theo1} constrains the spectrum to propagate causally on the relaxation plane. 
\textit{Upper left:} Suppose that the spectrum of a kinetic theory is known at $ik=0$. There is a hydrodynamic mode (lower point), a continuous non-hydrodynamic branch (middle band), and a fast-relaxing non-hydrodynamic pole (upper point). If the maximal propagation speed is $w=1/2$, then Theorem \ref{theo1} implies that, as $ik$ varies, the spectrum can expand at most at rate $1/2$. The yellow regions therefore mark the only portions of the relaxation plane where modes may exist.
\textit{Upper right:} Even if the exact propagation speed is not known, causality alone implies $w\leq 1$. The spectrum must therefore remain within cones of slope one.
\textit{Lower left:} The same geometric construction applies when the spectrum is known at nonzero $ik$ (here $ik=2$ and $w=1/2$).
\textit{Lower right:} Application to the model
\(
(\partial_t^2-\tfrac14\partial_x^2)\phi
+\tfrac12\partial_t(\partial_t^2-\tfrac{13}{36}\partial_x^2)\phi=0 ,
\)
which is an Israel--Stewart-type theory of viscous sound, with maximal propagation speed $w=\sqrt{13}/6$. Introducing auxiliary variables, the system can be written in the form \eqref{Boltzmann} as
$\partial_t\phi=-\frac{1}{2}\partial_x J$,
$\partial_t J=-\frac{1}{2}\partial_x\phi-\frac{1}{3}\partial_x\Pi$,
and
$\partial_t\Pi=-2\Pi-\frac{1}{3}\partial_x J$.
This coincides with the standard Israel--Stewart structure upon identifying
$\phi$ with the energy-density perturbation,
$J$ with the flow velocity,
and $\Pi$ with the viscous stress \cite{GavassinoNonHydro2022}.
Assume that one knows three things: \textbf{(a)} the spectrum at $ik=-3$, \textbf{(b)} the value of $w$, and \textbf{(c)} that the theory is stable. Then, the rest of the spectrum can be constrained geometrically by drawing ``cones of light'' emitted from the known spectral points, with opening determined by $w$. In this picture, the unstable-like sector casts a shadow on the allowed spectral region. The exact dispersion relations are shown in blue.}
    \label{fig:Propagazia}
\end{figure}

\subsection{Predicting the breakdown of dispersion relations}
\vspace{-0.3cm}

We now address question \textbf{(i)} posed in the introduction. 
Suppose that we are given a prediction for a dispersion relation $i\omega=f(ik)$ describing an isolated mode, and wish to determine when this prediction ceases to be compatible with causality. Theorem \ref{theo2} immediately provides a simple criterion. As long as the mode remains isolated, its trajectory on the relaxation plane must be evanescent-like. Hence, if $|d(i\omega)/d(ik)|$ becomes larger than $1$, one of two things must occur: either the true mode departs from the predicted curve $f$, meaning that the approximation scheme breaks down, or the mode itself ceases to be isolated. We also note that, since isolated dispersion relations are analytic, if $f$ is exact within some finite interval of $ik$, it remains exact outside that interval, so the mode must collide with another portion of the spectrum before $|d(i\omega)/d(ik)|$ can exceed $1$. Combining this fact with Theorem \ref{theo1}, which constrains how rapidly spectral components may approach one another, allows us to estimate where such collisions must occur.

As a first example, suppose that, in a neighborhood of $ik=0$, the system possesses an isolated mode obeying the diffusion law $i\omega=-\mathfrak{D}(ik)^2$ exactly (with $\mathfrak{D}$ the diffusivity), as in Fokker--Planck kinetic theory in $1{+}1$ dimensions \cite{GavassinoFokkerPLanck:2026zsz}, or in the shear sector of RTA (Relaxation-Time-Approximation) kinetic theory in $2{+}1$ dimensions \cite{Bajec:2024jez}. Such a mode becomes unstable-like for $|ik|>1/\mathfrak{D}$. However, causality forces it to break down much earlier. In fact, applying the bound \eqref{boundtheo2} (with $w\leq1$) to $d(i\omega)/d(ik)=-2\mathfrak{D}ik$, we find that the mode becomes incompatible with causality for $|ik|>1/(2\mathfrak{D})$. Thus, a mode collision must occur no later than $|ik|=(2\mathfrak{D})^{-1}$.
Suppose that, at $ik=0$, the remainder of the spectrum lies within the region $i\omega\geq(4\mathfrak{D})^{-1}$ (which is the case in both Fokker-Planck and RTA). Theorem \ref{theo1} then implies that, at finite $ik$, this portion of the spectrum can move downward at most at unit speed, and must therefore remain within the region $i\omega\geq(4\mathfrak{D})^{-1}-|ik|$. Geometrically, the only possible collision point with the diffusive mode is therefore $(i\omega,ik)=(-(4\mathfrak{D})^{-1},\pm(2\mathfrak{D})^{-1})$, see Fig.~\ref{fig:DeathOfDiffusion} (left panel). Explicit calculations indeed show that this is precisely where the diffusive mode disappears in these models, being absorbed into the continuum.

\begin{figure}[b!]
    \centering
\includegraphics[width=0.39\linewidth]{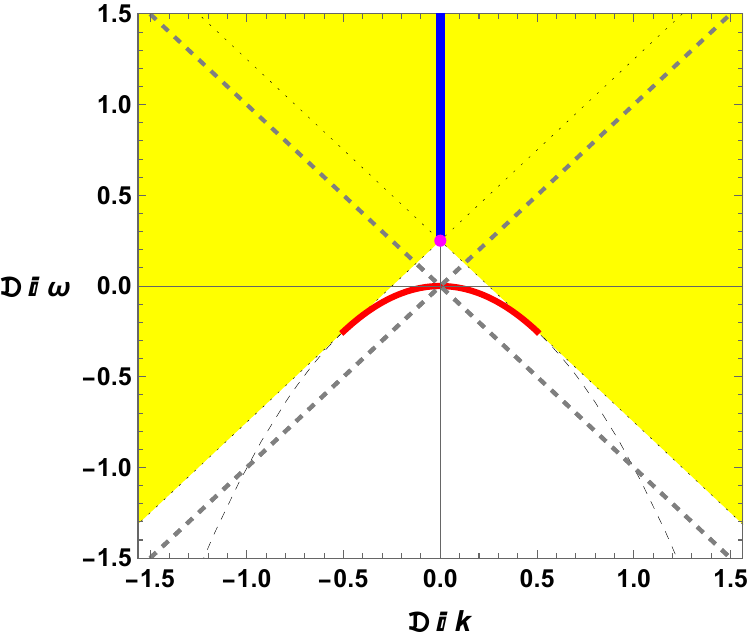}\hspace{0.08\linewidth}
\includegraphics[width=0.39\linewidth]{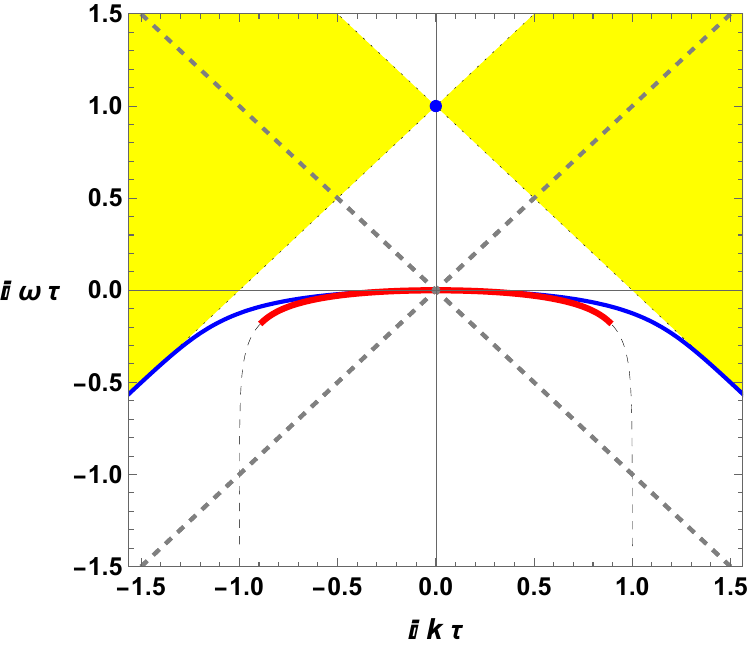}
\caption{Using causality to constrain dispersion relations. 
\textit{Left panel:} Fokker--Planck theory in $1+1$ dimensions and the shear sector of RTA kinetic theory in $2+1$ dimensions possess an isolated diffusive mode obeying $i\omega=-\mathfrak{D}(ik)^2$ exactly (red). For $|\mathfrak{D}ik|>1/2$, the bound \eqref{boundtheo2} is violated, implying that the mode must collide with another spectral component at or before this point. In Fokker--Planck theory, the non-hydrodynamic sector occupies the region $\mathfrak{D}i\omega\geq1/4$ at zero wavenumber (blue). Theorem \ref{theo1} implies that this sector can propagate downward at most at unit speed, forcing a collision precisely at $\mathfrak{D}ik=\pm1/2$, where the diffusive mode is absorbed into the continuum. In RTA kinetic theory, the role of the continuum is played by an infinitely degenerate non-hydrodynamic pole at $\mathfrak{D}i\omega=1/4$ (magenta), which broadens at finite wavenumber into the band $\mathfrak{D}i\omega\in[1/4-|\mathfrak{D}ik|,\,1/4+|\mathfrak{D}ik|]$, again absorbing the diffusive mode at the same point.
\textit{Right panel:} Spiegel's model \eqref{spiegel} for radiative transfer (shown here for $\mathfrak{D}=0.1\tau$) neglects the finite propagation speed of photons. As a result, the predicted mode develops an unphysical divergence at $ik\tau=\pm1$ (dashed). The causal bound \eqref{boundtheo2} predicts that the approximation must fail earlier (transition from red to dashed). The exact dispersion relation is shown in blue.}
    \label{fig:DeathOfDiffusion}
\end{figure}

Let us now consider an example in which the predicted dispersion relation is only approximate. Consider Spiegel's model for radiative transfer in a medium in local thermodynamic equilibrium \cite{Spiegel1957}:
\vspace{-0.2cm}
\begin{equation}\label{spiegel}
i\omega=\dfrac{3\mathfrak{D}}{\tau^2} \left[1-\dfrac{\text{arctanh}(ik\tau)}{ik\tau} \right] \, ,
\end{equation}
where $\mathfrak{D}$ is the radiative diffusivity and $\tau$ is the photon mean free path, with $\mathfrak{D}\ll\tau$. The central approximation underlying this model is that photons propagate instantaneously from their emission point to their absorption point, which manifestly violates causality \cite{GavassinoRadiativeBounds:2025bxx}. Accordingly, Fig.~\ref{fig:DeathOfDiffusion} (right panel) shows that the bound \eqref{boundtheo2} eventually fails. This breakdown occurs precisely when the approximation \eqref{spiegel} ceases to be reliable.

\subsection{Bounds on deviations from time dilation}

We now address question \textbf{(ii)} posed in the introduction. Time dilation states that, if a physical excitation relaxes over a timescale $\tau$ in the medium's rest frame, then the same excitation relaxes over a timescale $\tau'=\gamma\tau$ in a frame moving with velocity $v$. From Fig.~\ref{fig:MilneRindler}, we know that this statement is exact when signals are purely advected by the medium, namely when the transport term $\E$ in \eqref{Boltzmann} vanishes. Once a finite propagation speed $w$ is present, however, deviations from exact time dilation become possible \cite{GavassinoDisturbing:2026klp}. We now derive geometric bounds on such deviations.

Consider an isolated mode located at $(i\omega,ik)=(1/\tau,0)$. Theorem \ref{theo1} implies that, at finite $ik$, this mode must remain within the region
\begin{equation}
\dfrac{1}{\tau}-w|ik|
\leq
i\omega
\leq
\dfrac{1}{\tau}+w|ik| \, .
\end{equation}
Expressing these bounds in Milne coordinates yields $(1+|v|w)^{-1}
\leq
\tau\gamma s
\leq
(1-|v|w)^{-1}$.
Since $s$ represents the relaxation rate $1/\tau'$ of spatially homogeneous modes as measured by an observer moving with velocity $v$, we obtain the following bounds on deviations from time dilation (see Fig.~\ref{fig:DilationFails}):
\begin{equation}\label{bounddilazia}
1-|v|w
\leq
\dfrac{\tau'}{\tau\gamma}
\leq
1+|v|w \, .
\end{equation}

These bounds admit a simple geometric interpretation. Suppose that a signal is emitted at $(t,x)=(0,0)$, propagates with maximal speed $\pm w$, and is absorbed at $(t,x)=(\tau,\pm\tau w)$. In a frame moving with velocity $v$, the same process begins at $(t',x')=(0,0)$ and ends at $(t',x')
=
(\gamma\tau(1\mp vw),\,\gamma\tau(\pm w-v)) $.
Hence, the observed lifetime is bounded between $\tau\gamma(1-|v|w)$ and $\tau\gamma(1+|v|w)$.

The bound \eqref{bounddilazia} is closely related to that derived in \cite{GavassinoDisturbing:2026klp}, but is considerably stronger. First, the upper bound in \eqref{bounddilazia} improves upon that of \cite{GavassinoDisturbing:2026klp}, where only the weaker inequality $\tau'/(\gamma\tau)
\leq
(1-|v|w)^{-1}
=
1+|v|w+|v|^2w^2+\cdots$
was established. Moreover, the result of \cite{GavassinoDisturbing:2026klp} constrains only the extremal points of the non-hydrodynamic spectrum, whereas \eqref{bounddilazia} applies to any isolated mode, and extends naturally to arbitrary isolated components of the spectrum. In particular, suppose that, in the rest frame, an isolated sector of the spectrum is contained within some interval $1/\tau_{\mathrm{Max}}\leq i\omega\leq 1/\tau_{\mathrm{min}}$ at $ik=0$, where $\tau_{\mathrm{Max}}$ and $\tau_{\mathrm{min}}$ denote respectively the longest and shortest relaxation times within that sector. Then, in a boosted frame, the same spectral component must remain within some interval $1/\tau_{\mathrm{Max}}'\leq i\omega'\leq 1/\tau_{\mathrm{min}}'$ (assuming that collisions with other spectral sectors are avoided), with
\begin{equation}\label{taumaxmin}
\tau_{\mathrm{Max}}'
\leq
\gamma(1+|v|w)\tau_{\mathrm{Max}}
\qquad\text{and}\qquad
\tau_{\mathrm{min}}'
\geq
\gamma(1-|v|w)\tau_{\mathrm{min}} \, .
\end{equation}
This follows from the same geometric argument that led to \eqref{bounddilazia}.

\begin{figure}[b!]
    \centering
\includegraphics[width=0.41\linewidth]{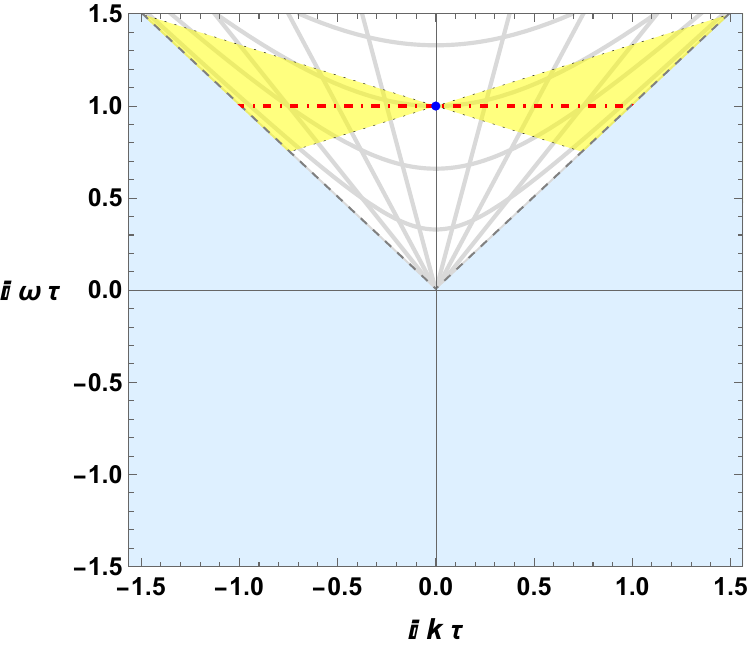}\hspace{0.08\linewidth}
\includegraphics[width=0.4\linewidth]{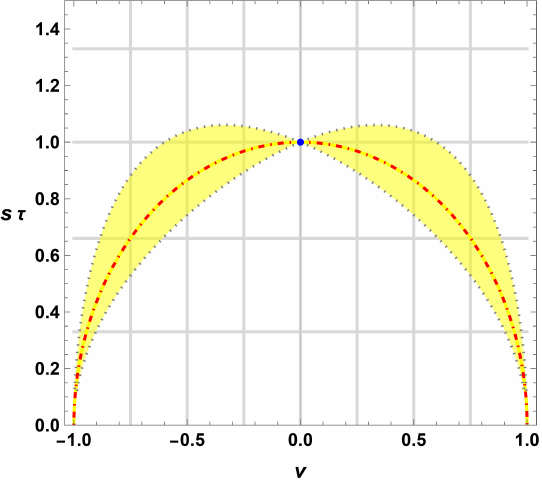}
\caption{Deviations from exact time dilation induced by a finite propagation speed $w$ (here $w=1/3$). We consider an isolated mode satisfying $i\omega\tau=1$ at $ik=0$ (blue point). Exact time dilation, namely $s\tau=1/\gamma$, corresponds to a dispersion relation independent of $ik$ (red line), which is a good approximation in non-relativistic media, where signals effectively comove with the medium (i.e. $w\ll1$) \cite{GavassinoDisturbing:2026klp}. Once the medium supports propagation speeds $w$ comparable to the speed of light, however, the mode may explore a finite region of the relaxation plane (yellow). In Milne coordinates, this translates into a thickened time-dilation profile bounded by \eqref{bounddilazia}.}
    \label{fig:DilationFails}
\end{figure}

\subsection{Breakdown of hierarchies of timescales}\label{hierarchies}

We now address question \textbf{(iii)} posed in the introduction. Suppose that, at $ik=0$, the spectrum splits into a slow sector, whose modes decay over timescales not shorter than $\tau_s$, and a fast sector, whose modes decay over timescales not longer than $\tau_f<\tau_s$. We ask whether this spectral hierarchy must persist in boosted frames.

The answer follows directly from the causal geometry of the relaxation plane (see Fig.~\ref{fig:QuasiMiln}). The slow sector remains confined below the line $i\omega=\tau_s^{-1}+w|ik|$, while the fast sector is confined above the line $i\omega=\tau_f^{-1}-w|ik|$. These two boundaries intersect at the critical wavevectors $(i\omega_c,\pm ik_c)=\frac12\big(\tau_f^{-1}+\tau_s^{-1},\,\pm(\tau_f^{-1}-\tau_s^{-1})/w\big)$.
If these wavevectors are evanescent-like, then the two spectral sectors remain separated throughout the entire relaxation-like region. Passing to Milne coordinates, one concludes that all inertial observers agree on the existence of the hierarchy. Hence,
\begin{equation}\label{ObsIndepHier}
w
\leq
\dfrac{\tau_s-\tau_f}{\tau_s+\tau_f}
\qquad\Longrightarrow\qquad
\text{the spectral hierarchy is observer-independent.}
\end{equation}
Conversely, if the critical wavevectors are relaxation-like, then the two causal regions overlap inside the relaxation sector. In Milne coordinates, this implies that there exist boosted frames in which the spectral separation may disappear. One therefore obtains
\begin{equation}
w
>
\dfrac{\tau_s-\tau_f}{\tau_s+\tau_f}
\qquad\Longrightarrow\qquad
\text{possible breakdown for }
|v|
\geq
\dfrac{\tau_s-\tau_f}{w(\tau_s+\tau_f)} \, .
\end{equation}

Several immediate consequences follow. If $w=1$, the only hierarchy guaranteed to remain observer independent is the limiting case $\tau_s=\infty$ with finite $\tau_f$, corresponding to the separation between hydrodynamic modes and a gapped non-hydrodynamic sector. In other words, causality alone guarantees only the persistence of the hierarchy between hydrodynamic behavior and microscopic equilibration. By contrast, if $\tau_s\gg\tau_f$ but finite, guaranteeing \textit{a priori} that the hierarchy is maintained in all boosted frames requires $w$ to be strictly smaller than $1$. Finally, in non-relativistic media ($w\ll1$), any initially separated spectral sectors remain separated for all observers.

We can make a further interesting observation. Suppose that the critical wavevectors $(i\omega_c,\pm ik_c)$ are evanescent-like, so we are in the case \eqref{ObsIndepHier}. Then the two spectral sectors cannot intersect within the relaxation-like region, but may still collide inside the Rindler wedges. Suppose that such a collision occurs precisely at $(i\omega_c,\pm ik_c)$. Passing to Rindler coordinates, one finds that this intersection controls a transition in the structure of non-equilibrium wakes (see Appendix \ref{aaa}). In particular, an immersed object moving at speed $v>0$ leaves behind a non-equilibrium tail whose structure changes at $v=|i\omega_c/ik_c|$: below this speed, the wake is controlled by one continuous set of decay lengths, whereas above it the wake separates into two distinct spatial-decay components.

\begin{figure}[h!]
    \centering
\includegraphics[width=0.40\linewidth]{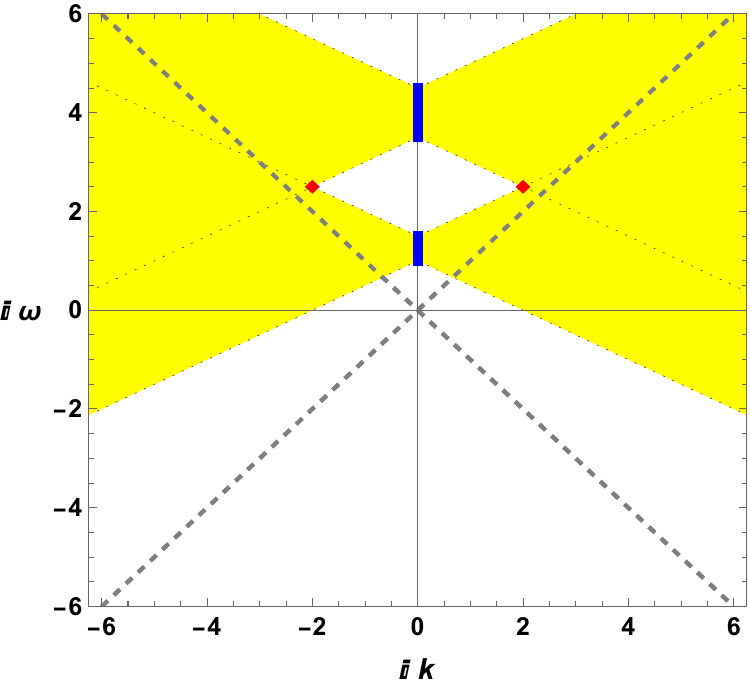}\hspace{0.08\linewidth}
\includegraphics[width=0.39\linewidth]{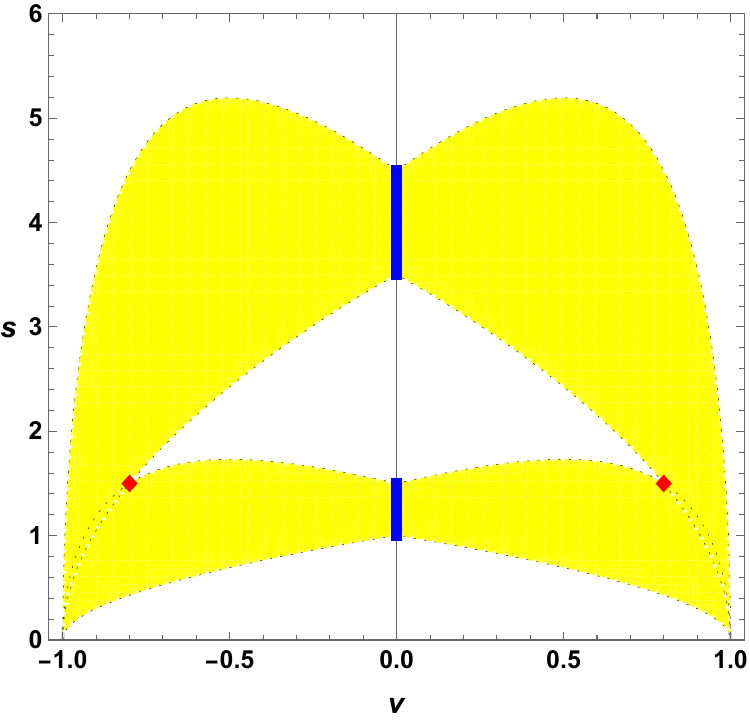}
\caption{An example of a model in which a rest-frame spectral hierarchy breaks down at large boosts. At $ik=0$, two disconnected spectral components occupy the intervals $i\omega\in[1,1.5]$ and $i\omega\in[3.5,4.5]$ (blue segments). The maximal signaling speed is assumed to be $w=1/2$. Propagating these spectral regions at the maximal allowed rate $1/2$, one finds that they can meet at the critical wavevectors $(i\omega_c,\pm ik_c)=(2.5,\pm2)$ (red dots), which lie inside the relaxation-like sector. Passing to Milne coordinates (right panel), one concludes that observers moving with speeds $|v|\geq|ik_c/i\omega_c|=0.8$ may observe a single connected spectral component (at vanishing wavenumber).}
    \label{fig:QuasiMiln}
\end{figure}

\subsection{The regime of validity of hydrodynamics}
\vspace{-0.3cm}

Using the insight developed in the previous subsections, we are now ready to address question \textbf{(iv)}. Suppose that, in the rest frame of the medium and at $ik=0$, there is a (possibly finitely degenerate) mode with $i\omega=0$, while the rest of the spectrum is confined to the region $i\omega\geq 1/\tau_g$, where $\tau_g$ may be interpreted as the non-hydrodynamic gap. As $ik$ is turned on, the zero mode generally splits into a finite collection of hydrodynamic modes, each analytic in a neighborhood of $ik=0$ \cite{Dudynski1989Hydro,GavassinoConvergence:2024xwf}. Standard analytic perturbation theory \cite{Kato1949_Perturbation_I} guarantees that these hydrodynamic modes remain analytic for real $ik$ at least up to $|ik|=(2w\tau_g)^{-1}$, even if they happen to cross one another at finite $ik$ \cite[\S 7.3.2]{Kato_Perturbation_Theory}. Moreover, $(2w\tau_g)^{-1}$ is also the minimal value of $|ik|$ at which the hydrodynamic and non-hydrodynamic sectors can collide. Hence, whether one defines the radius of validity of hydrodynamics as the distance from the origin within which hydrodynamic dispersion relations are analytic, or as the range of $ik$ within which hydrodynamic modes remain the longest-lived excitations, one obtains
\vspace{-0.1cm}
\begin{equation}\label{Rbound}
\mathcal{R}\geq \dfrac{1}{2w\tau_g} \, .
\end{equation}

Let us now discuss what happens in boosted frames. Substituting \eqref{boost} into \eqref{boltzmannFourier}, we obtain $[\s+ik'\gamma(\E-v)]\Psi=i\omega'\gamma(1-v\E)\Psi$. Since $1-v\E$ is self-adjoint and positive definite by causality, we can define
\begin{equation}
\Psi' =\gamma^{1/2} (1-v\E)^{1/2}\Psi \, , \qquad
\s' =\gamma^{-1} (1-v\E)^{-1/2}\s (1-v\E)^{-1/2}\, , \qquad
\E' = (1-v\E)^{-1} (\E-v) \, , 
\end{equation}
and thereby obtain, in the boosted frame as well, a Boltzmann-like spectral problem of the form $(\s'+ik'\E')\Psi'=i\omega'\Psi'$. Thus, the arguments of Sec.~\ref{mainth} on existence and analyticity of dispersion relations apply unchanged. In particular, by Lorentz-transforming the relaxation plane (see Fig.~\ref{fig:HydroBreaks}), one finds that collisions with the non-hydrodynamic spectrum are necessarily avoided as long as
\vspace{-0.3cm}
\begin{equation}
\dfrac{\gamma (-1-vw)}{2w\tau_g} < ik'< \dfrac{\gamma (1-vw)}{2w\tau_g} \, ,
\end{equation}
so the hydrodynamic modes exist and remain analytic throughout this interval.
The convergence radius $\mathcal{R}'$ of the Taylor expansion of an individual boosted dispersion relation $i\omega'(ik')$ around $ik'{=}0$ can, however, be smaller. It is bounded (modulo hydrodynamic-mode collisions) by the boosted analogue of the 
estimate \eqref{Rbound}, $\mathcal{R}'\,{\geq}\, (2w'\tau_g')^{-1}$ \cite{Kato1949_Perturbation_I}, where $w'\equiv||\E'||\leq (w+|v|)/(1+|v|w)$ by the spectral mapping theorem, and $\tau_g'\leq \gamma(1+|v|w)\tau_g$ by \eqref{taumaxmin}. Thus, 
\begin{equation}
\mathcal{R}' \geq \dfrac{1}{2\gamma(w+|v|)\tau_g} \, .
\end{equation}
 
\begin{figure}[h!]
    \centering
\includegraphics[width=0.39\linewidth]{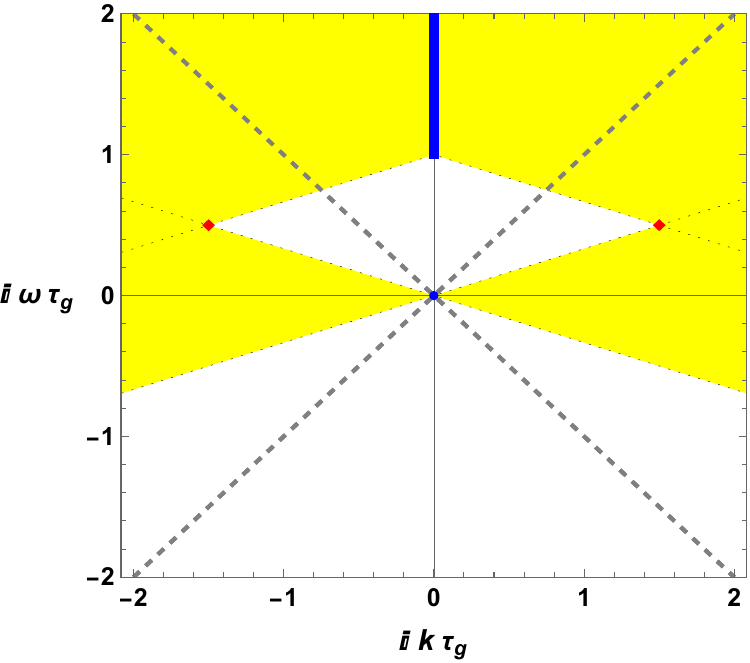}\hspace{0.08\linewidth}
\includegraphics[width=0.39\linewidth]{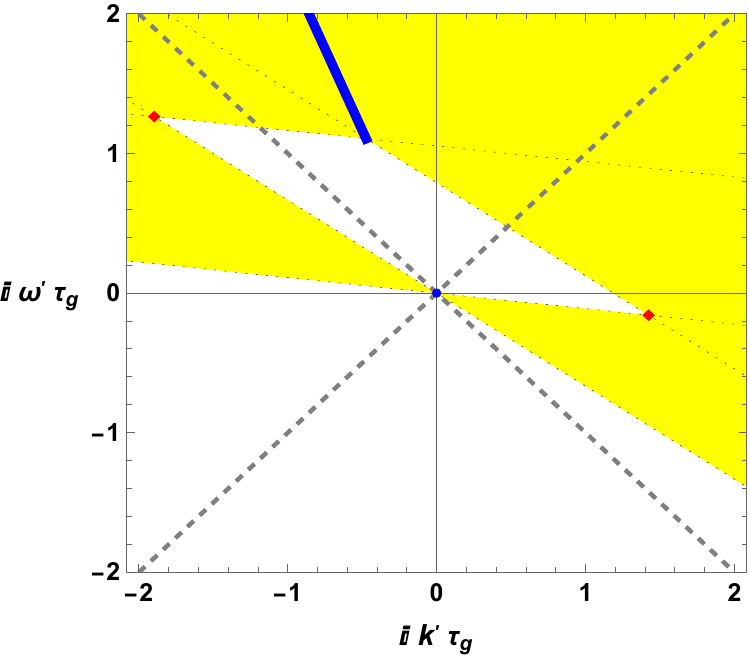}
\caption{The regime of validity of hydrodynamics. \textit{Left panel:} In the rest frame, the hydrodynamic modes can collide with the non-hydrodynamic spectrum no earlier than at the critical wavevectors $(i\omega_c,\pm ik_c)=(\frac{1}{2\tau_g},\pm\frac{1}{2w\tau_g})$ (red dots). Hence, the hydrodynamic dispersion relations necessarily exist and remain analytic at least within the interval $ik\in[-ik_c,ik_c]$. \textit{Right panel:} After a Lorentz boost, the same existence and analyticity arguments continue to apply. However, the critical wavevectors become asymmetric, with $ik_c'=\frac{\gamma}{2w\tau_g}(\pm1-vw)$. Consequently, the hydrodynamic regime is distorted on the relaxation plane.}
    \label{fig:HydroBreaks}
\end{figure}

\section{Morphology of the hydrodynamic sector}

In Fig.~\ref{fig:CattAndIS}, we displayed the spectra of a minimal theory of diffusion (left panel) and a minimal theory of sound (right panel), both of which can be recast in the standard form \eqref{Boltzmann}. In these examples, the non-hydrodynamic sector is of Israel--Stewart type, consisting of a single non-hydrodynamic mode, and is therefore much simpler than that of a realistic kinetic theory. Nevertheless, one can still ask which qualitative features of the \textit{hydrodynamic} part of the spectrum survive in more complicated models. Since hydrodynamic modes remain evanescent-like as long as they are separated from the rest of the spectrum, such an analysis directly constrains the structure of wakes and bow waves in the long-wavelength regime (see Appendix \ref{aaa}).

In the following, we will focus on stable theories with a gapped non-hydrodynamic sector, namely theories satisfying $\text{Spectrum}(\s)\subseteq \{0\}\cup [1/\tau_g,\infty)$ with finite $\tau_g$.

\subsection{Diffusion of a single charge}\label{diffusionofasignlecharge}

Let us consider a system with a single relevant conservation law (typically energy or particle number). Then, the mode $(i\omega,ik)=(0,0)$ is non-degenerate, and turning on a finite $ik$ produces a single hydrodynamic mode. As long as this mode remains isolated, it is analytic. Furthermore, standard analytic perturbation theory \cite{Kato1949_Perturbation_I} yields
\begin{equation}\label{seconderivative}
\dfrac{d^2(i\omega)}{d(ik)^2}
=
-2\left(\mathbb{P}\E\Psi,(\s_{ik}-i\omega)^{-1}_{\Psi^\perp}\mathbb{P}\E\Psi \right) \, ,
\end{equation}
where $\Psi$ is the normalized eigenvector associated with the mode, $(\s_{ik}-i\omega)^{-1}_{\Psi^\perp}$ denotes the resolvent restricted to the Hilbert subspace $\Psi^\perp$ orthogonal to $\Psi$, and $\mathbb{P}$ is the orthogonal projector onto $\Psi^\perp$.

Now, since the diffusive mode is the lowest eigenvalue of $\s_{ik}$, the operator $(\s_{ik}-i\omega)^{-1}_{\Psi^\perp}$ is non-negative definite. Hence, $(i\omega)''\leq0$, meaning that the diffusive mode is always downward concave.
Suppose moreover that the system is invariant under the reflection $x\mapsto -x$ (i.e. there is a unitary transformation that sends $\s\mapsto \s$ and $\E\mapsto -\E$). Since the hydrodynamic mode is unique, it must then be symmetric under $ik\to-ik$, and therefore satisfy $i\omega'(0)=0$. Combined with downward concavity, this implies that $i\omega'\leq0$ for positive $ik$ and $i\omega'\geq0$ for negative $ik$. Integrating, one concludes that $i\omega\leq0$ everywhere. All these features are explicitly visible in Figs.~\ref{fig:CattAndIS} (left panel) and \ref{fig:DeathOfDiffusion}.

Finally, at $ik=0$, we have $||(\s_{ik}-i\omega)^{-1}_{\Psi^\perp}||=1/\text{distance}(0,\text{Spectrum}(\s)\backslash\{0\})\leq\tau_g$. Therefore,
\begin{equation}\label{estimiamo}
\bigg|
\dfrac12\dfrac{d^2(i\omega)}{d(ik)^2}
\bigg|_{ik=0}
=
|\left(\mathbb{P}\E\Psi,(\s_{ik}-i\omega)^{-1}_{\Psi^\perp}\mathbb{P}\E\Psi \right)|
\leq
||(\s_{ik}-i\omega)^{-1}_{\Psi^\perp}||
\,||\E||^2
\leq
w^2\tau_g \, .
\end{equation}

Combining these observations, we obtain the following result.

\begin{theorem}\label{theo3}
Consider a system of the form \eqref{Boltzmann} with a single conserved charge. Suppose that the system is invariant under the reflection $x \mapsto -x$. Then, as long as the charge-diffusion mode $i\omega(ik)$ remains isolated, the following properties hold for real $ik$: \textup{\textbf{(a)}} $i\omega\leq0$, \textup{\textbf{(b)}} $(i\omega)''\leq0$, and \textup{\textbf{(c)}} $i\omega(ik)$ is even in $ik$. Moreover, Taylor-expanding around the origin yields $i\omega=-\mathfrak{D}(ik)^2+\cdots$, where the diffusivity satisfies the universal bound
\begin{equation}\label{Dwtau}
0\leq \mathfrak{D}\leq w^2\tau_g \, .
\end{equation}
\end{theorem}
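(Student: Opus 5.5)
The plan is to assemble Theorem~\ref{theo3} from the second-order perturbation formula \eqref{seconderivative}, a variational characterization of the diffusive branch, and the reflection symmetry.

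\textbf{Step 1: the diffusive mode is the bottom of the spectrum.} Stability gives $\s\geq0$, and the single conservation law makes $0$ a simple eigenvalue of $\s$, with the rest of $\text{Spectrum}(\s)$ in $[1/\tau_g,\infty)$. By Theorem~\ref{theo1}, the continuation $i\omega(ik)$ of this eigenvalue stays at the bottom of $\text{Spectrum}(\s_{ik})$ for as long as it is isolated. On that interval Kato's analytic perturbation theory applies: both $i\omega(ik)$ and a normalized eigenvector $\Psi(ik)$ are real-analytic. The Hellmann--Feynman relation gives $(i\omega)'=(\Psi,\E\Psi)$.

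\textbf{Step 2: concavity, property (b).} Differentiating once more yields \eqref{seconderivative}. Write $R=(\s_{ik}-i\omega)^{-1}$ restricted to $\Psi^\perp$. Because $i\omega$ is the lowest point of the spectrum and is isolated, $\s_{ik}-i\omega$ is strictly positive on $\Psi^\perp$, so $R\geq0$ and therefore $(i\omega)''\leq0$. A cleaner alternative, which avoids any worry about the branch staying lowest, is the Rayleigh--Ritz characterization $i\omega(ik)=\inf_{\|\phi\|=1}(\phi,(\s+ik\E)\phi)$. This is an infimum of functions affine in $ik$, hence concave. I would present this version as the primary argument and use \eqref{seconderivative} for the quantitative part.

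\textbf{Step 3: evenness and sign, properties (c) and (a).} Let $U$ be the unitary with $U\s U^{-1}=\s$ and $U\E U^{-1}=-\E$. Then $U\s_{ik}U^{-1}=\s_{-ik}$, so the two operators have the same spectrum. In particular their lowest isolated simple eigenvalues coincide: $i\omega(ik)=i\omega(-ik)$, which is (c). Evenness gives $(i\omega)'(0)=0$. Concavity then makes $(i\omega)'$ nonincreasing, so $(i\omega)'\leq0$ for $ik>0$ and $(i\omega)'\geq0$ for $ik<0$. Integrating from $i\omega(0)=0$ gives $i\omega\leq0$, which is (a). The isolation interval contains a symmetric neighborhood of $0$ and is itself symmetric under the reflection, so restricting to it causes no loss.

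\textbf{Step 4: the diffusivity bound.} Evenness and analyticity give $i\omega=-\mathfrak{D}(ik)^2+\cdots$ with $\mathfrak{D}=-\tfrac12(i\omega)''(0)$, and concavity gives $\mathfrak{D}\geq0$. For the upper bound, evaluate \eqref{seconderivative} at $ik=0$, where $i\omega=0$ and $\Psi$ spans $\ker\s$. On $\Psi^\perp$ we have $\s\geq1/\tau_g$, so $\|R\|\leq\tau_g$. We also have $\|\mathbb{P}\E\Psi\|\leq\|\E\|=w$. Cauchy--Schwarz then yields $\mathfrak{D}=(\mathbb{P}\E\Psi,R\,\mathbb{P}\E\Psi)\leq w^2\tau_g$, reproducing \eqref{estimiamo}.

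\textbf{Main obstacle.} The delicate step is justifying \eqref{seconderivative} rigorously when $\s$ is unbounded. The family $\s+ik\E$ is a holomorphic family of type (A) because $\E$ is bounded. This lets me invoke Kato's reduction process: the Riesz projector onto the isolated eigenvalue is analytic, and the second-order coefficient is exactly $-(\E\Psi,S\E\Psi)$ with $S$ the reduced resolvent. The factor $2$ in \eqref{seconderivative} comes from converting the second-order Kato coefficient $\lambda^{(2)}$ into $(i\omega)''=2\lambda^{(2)}$. I would cite \cite{Kato_Perturbation_Theory} for this rather than rederive it.
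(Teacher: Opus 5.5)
Your proposal is correct and follows the paper's argument essentially step by step. Concavity comes from the non-negativity of the reduced resolvent in \eqref{seconderivative}, evenness from the reflection unitary, the sign of $i\omega$ from evenness plus concavity, and $0\leq\mathfrak{D}\leq w^2\tau_g$ from the reduced resolvent at $ik=0$ having norm at most $\tau_g$ together with $\|\E\|=w$. Your Rayleigh--Ritz variant is a pleasant elementary alternative for (b), since it shows the bottom of $\text{Spectrum}(\s_{ik})$ is concave for all real $ik$. It does not, however, remove the need for your Step~1 identification of the branch with the spectral bottom on the isolation interval, which the paper simply asserts and which you justify via Theorem~\ref{theo1}.
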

The causal bound $\mathfrak{D}\lesssim w^2\tau_g$ on diffusivity was first proposed in \cite{Hartman:2017hhp} using a simple qualitative argument. A particle undergoing Brownian motion travels a typical distance $x=\sqrt{\mathfrak{D}t}$ over a time $t$. At sufficiently early times, this exceeds the maximal causal distance $x=wt$. Ordinary diffusion must therefore break down near the timescale $\tau\sim\mathfrak{D}/w^2$, where the two curves intersect. This is precisely the scale at which one expects non-hydrodynamic modes to appear. The remarkable fact is that, for theories governed by \eqref{Boltzmann}, the above causal bound is \textit{exact}.
In fact, it is also optimal, since it is saturated by Cattaneo's theory, corresponding to $\mathcal{H}=\mathbb{C}^2$ and
\begin{equation}
\E=
\begin{bmatrix}
0 & w \\
w & 0 \\
\end{bmatrix}
\, ,
\qquad\qquad
\s=
\begin{bmatrix}
0 & 0 \\
0 & 1/\tau_g \\
\end{bmatrix} \, .
\end{equation}
Also, note that \eqref{Dwtau} is considerably tighter than the corresponding hydrohedron bound $\mathfrak{D}\leq 16 w/(3\pi \mathcal{R})$. Indeed, combining this hydrohedron inequality with \eqref{Rbound}, one obtains the weaker estimate
\begin{equation}\label{hydrohedronbound}
\mathfrak{D}
\leq
\dfrac{32}{3\pi} w^2\tau_g
\approx
3.4\, w^2\tau_g \qquad\qquad (\text{hydrohedron bound}) \, .
\end{equation}
On the other hand, it should be emphasized that the bound \eqref{Dwtau} need not apply to theories whose spectrum is oscillatory at real $ik$, and which therefore cannot be recast in the Boltzmann-like form \eqref{Boltzmann}. By contrast, the hydrohedron bound remains valid in such cases.

\subsection{Sound waves}
\vspace{-0.3cm}

Let us now consider a fluid at zero chemical potential. Suppose that this fluid is isolated, so that energy and momentum conservation must be explicitly enforced at the level of \eqref{Boltzmann}. Moreover, assume that the system is in a normal phase, so that no additional Goldstone modes are present.\footnote{Superfluids can also be described within theories of the form \eqref{Boltzmann} \cite{GavassinoUniveraalityI2023odx}. Their characteristic signature would then be the doubling of the sound modes, corresponding to first and second sound.}

Restricting attention to excitations that are isotropic in the directions transverse to the $x$ axis (i.e. ``spin-0'' states), one finds that the mode $(i\omega,ik)=(0,0)$ is twofold degenerate, due to energy and longitudinal momentum conservation. Turning on a finite $ik$ lifts this degeneracy and produces two distinct hydrodynamic modes: the left-moving and right-moving sound waves. If the system is invariant under the reflection $x\mapsto-x$, these two modes must be specular copies of one another. Expanding around the origin, one then finds $i\omega=\pm c_s ik-\mathfrak{D}(ik)^2+\cdots$,
where $c_s\in[0,w]$ is the speed of sound and $\mathfrak{D}$ is the acoustic diffusivity.

Away from the origin, the two sound modes are generically non-degenerate, and \eqref{seconderivative} applies. In particular, the branch with lower $i\omega$ is always downward concave, since $(\s_{ik}-i\omega)^{-1}_{\Psi^\perp}$ is non-negative definite. No analogous general statement exists for the upper branch at finite $ik$. 
Near $ik=0$, however, equation \eqref{seconderivative} is no longer useful, because the two modes become arbitrarily close, and the reduced resolvent diverges. Indeed, at $ik=0$, the identity itself ceases to apply, since it is valid only for non-degenerate modes. In this regime, one must instead use degenerate perturbation theory \cite{Kato1950_Perturbation_II}. Let $\Psi_\pm$ denote the orthonormal basis that diagonalizes the restriction of $\E$ to $\ker(\s)$, and let $\mathbb{P}$ be the orthogonal projector onto $\{\Psi_+,\Psi_-\}^\perp$. Choosing $\Psi_+$ to represent the right-moving sound mode, so that $(\Psi_+,\E\Psi_+)=+c_s$, and using $\mathbb{P}\Psi_+=0$, we obtain
\begin{equation}\label{estimiamo2}
\begin{split}
\mathfrak{D}
& =
\left(\mathbb{P}\E\Psi_+,(\s_{ik}-i\omega)^{-1}_{\{\Psi_+,\Psi_-\}^\perp}\mathbb{P}\E\Psi_+ \right)
\leq
\tau_g \left(\mathbb{P}\E\Psi_+,\mathbb{P}\E\Psi_+ \right)
\\
& =
\tau_g \left(\mathbb{P}(\E-c_s)\Psi_+,\mathbb{P}(\E-c_s)\Psi_+ \right)
\leq
\tau_g \left((\E-c_s)\Psi_+,(\E-c_s)\Psi_+ \right)
\\
& =
\tau_g \left[(\Psi_+,\E^2\Psi_+)-c_s^2\right]
\leq
\tau_g(w^2-c_s^2) \, .
\end{split}
\end{equation}
The quantity $w^2-c_s^2$ is just the variance of the propagation operator $\E$ within the sound eigenstate $\Psi_+$, meaning that diffusion is controlled by the incoherent fluctuations of the microscopic propagation speeds around the collective sound motion.
Combining these observations, we obtain the following result.

\begin{theorem}\label{theo4}
Consider a fluid governed by \eqref{Boltzmann}, whose only longitudinal hydrodynamic excitations are two sound modes $i\omega_\pm(ik)$. Suppose that the system is invariant under the reflection $x\mapsto-x$, so that $i\omega_-(ik)=i\omega_+(-ik)$. Then, as long as the sound modes remain isolated from the non-hydrodynamic spectrum, the following properties hold for real $ik$: \textup{\textbf{(a)}} $i\omega_\pm$ remain analytic even if they cross one another at finite $ik$, and \textup{\textbf{(b)}} at every $ik$, the branch with lower $i\omega$ satisfies $(i\omega)''\leq0$. Moreover, Taylor-expanding around the origin yields $i\omega=\pm c_s(ik)-\mathfrak{D}(ik)^2+\cdots$, where
\begin{equation}\label{boundonsound}
0\leq c_s\leq w \, ,
\qquad\qquad
0\leq \mathfrak{D}\leq (w^2-c_s^2)\tau_g \, .
\end{equation}
\end{theorem}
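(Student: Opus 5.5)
The proof reduces to spectral perturbation theory for the self-adjoint family $\s_{ik}=\s+ik\E$ near the twofold-degenerate zero eigenvalue of $\s$, where $\ker(\s)$ is spanned by energy and longitudinal momentum. The gap assumption $\text{Spectrum}(\s)\subseteq\{0\}\cup[1/\tau_g,\infty)$ isolates this eigenvalue.

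For \textbf{(a)}, I would invoke Rellich--Kato analytic perturbation theory for holomorphic self-adjoint families \cite[\S 7.3.2]{Kato_Perturbation_Theory}. Because $\s_{ik}$ is self-adjoint and analytic in $ik$ on the real axis, the eigenvalues of a finite isolated cluster can be chosen analytic for real $ik$, and eigenvalue crossings do not destroy analyticity. This is in contrast with the complex-$k$ case, where branch points appear. Theorem~\ref{theo1} ensures the two-dimensional spectral projection stays well defined as long as the cluster stays away from the non-hydrodynamic spectrum. The analytic branches $i\omega_\pm$ are therefore defined on the whole interval of isolation.

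For \textbf{(b)}, at any $ik$ where the two branches are distinct, the lower branch is the lowest eigenvalue of $\s_{ik}$. This holds because $\s\geq0$, stability, and isolation from the rest of the spectrum. Formula \eqref{seconderivative} then applies with a reduced resolvent $(\s_{ik}-i\omega)^{-1}_{\Psi^\perp}$. That resolvent is non-negative, since $\s_{ik}-i\omega\geq0$ on $\Psi^\perp$, and this gives $(i\omega)''\leq0$. At isolated crossing points the lower branch is the minimum of two analytic functions, and I would argue that concavity extends by continuity or is preserved by the min. The min of two concave functions is concave, but each branch is only known to be concave where it is lower. I would therefore argue through the min characterization: the lowest eigenvalue $\min_\Psi(\Psi,\s_{ik}\Psi)$ is an infimum of affine functions of $ik$, hence concave everywhere. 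This actually proves (b) directly and more cleanly.

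For the expansion, I would use degenerate perturbation theory \cite{Kato1950_Perturbation_II}. The first-order coefficients are the eigenvalues of $\E$ compressed to $\ker\s$, and reflection symmetry forces them to be $\pm c_s$. Here $c_s\leq\|\E\|=w$, and $c_s\geq0$ by labeling. The second-order coefficient of each branch is $-(\mathbb{P}\E\Psi_\pm,\s^{-1}_{\ker^\perp}\mathbb{P}\E\Psi_\pm)$. I expect this step to be the main obstacle: when $c_s>0$ the first-order splitting lifts the degeneracy and the formula is standard, but the possible second-order mixing between $\Psi_+$ and $\Psi_-$ must be checked to contribute nothing to the diagonal coefficient. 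I would also handle $c_s=0$ separately, either by using the reflection symmetry to block-diagonalize or by reducing to the single-charge argument. Given the formula, the bound follows as in \eqref{estimiamo2}. Positivity of $\s$ on $\ker^\perp$ gives $\mathfrak{D}\geq0$. Since $\s\geq1/\tau_g$ there, the inverse is at most $\tau_g$. Using $\mathbb{P}\Psi_+=0$, I would replace $\E$ by $\E-c_s$ and drop $\mathbb{P}$, so that $\|(\E-c_s)\Psi_+\|^2=(\Psi_+,\E^2\Psi_+)-c_s^2\leq w^2-c_s^2$, which proves \eqref{boundonsound}.
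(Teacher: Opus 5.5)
Your proposal is correct. For (a) and for the Taylor coefficients it follows the paper. You use Rellich's theorem for self-adjoint holomorphic families \cite[\S 7.3.2]{Kato_Perturbation_Theory}, then degenerate perturbation theory in the basis $\Psi_\pm$ that diagonalizes the compression of $\E$ to $\ker(\s)$, then exactly the chain \eqref{estimiamo2}.

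The obstacle you flag is a standard fact rather than an open step. Once that compression has simple eigenvalues $\pm c_s$, the second-order coefficient of each branch is the diagonal element $-(\mathbb{P}\E\Psi_\pm,\s^{-1}\mathbb{P}\E\Psi_\pm)$, with $\s^{-1}$ taken on $\{\Psi_+,\Psi_-\}^\perp$. Mixing between $\Psi_+$ and $\Psi_-$ enters the eigenvalues only at third order. For $c_s=0$, the two diffusivities are the eigenvalues of the $2\times2$ matrix $(\mathbb{P}\E\Psi_a,\s^{-1}\mathbb{P}\E\Psi_b)$, with $\Psi_a,\Psi_b$ an orthonormal basis of $\ker(\s)$. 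The same chain bounds each of them by $w^2\tau_g$.

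Where you genuinely depart from the paper is (b). The paper evaluates \eqref{seconderivative} at points where the lower branch is non-degenerate and uses the non-negativity of the reduced resolvent. You instead observe that $\inf\textup{Spectrum}(\s_{ik})=\inf\big[(\Psi,\s\Psi)+ik(\Psi,\E\Psi)\big]$, over unit vectors in the domain of $\s$, is an infimum of affine functions of $ik$ and hence concave on all of $\mathbb{R}$.

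Your route is more elementary and uses neither analyticity nor non-degeneracy. It also covers crossing points with no extra work, since there the lower envelope has at most a concave kink. The paper's route instead produces an explicit expression for $(i\omega)''$, which it then evaluates at $ik=0$ to get the quantitative bound on $\mathfrak{D}$. So in the paper, concavity and the diffusivity bound come from a single identity.

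Both routes need the lower sound branch to be the bottom of the spectrum of $\s_{ik}$. The reason is not $\s\geq0$ alone but continuity. The pair starts at $0=\min\textup{Spectrum}(\s)$. By Theorem~\ref{theo1} and the continuity of the isolated spectral projection, the non-hydrodynamic spectrum cannot pass below the cluster without first touching it.
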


Also in this case, the upper bound on $\mathfrak{D}$ is tighter\footnote{The physical interpretation of this bound parallels the argument of \cite{Hartman:2017hhp}. Consider a sound pulse whose centroid propagates at speed $c_s$. In the diffusive regime, its spatial variance grows as $\langle x^2\rangle-\langle x\rangle^2\sim\mathfrak{D}t$, so that $\langle x^2\rangle\sim c_s^2t^2+\mathfrak{D}t$. Causality requires $\langle x^2\rangle\leq w^2t^2$, implying that acoustic diffusion must break down on a timescale $\tau\sim\mathfrak{D}/(w^2-c_s^2)$.} than the corresponding hydrohedron bound. Moreover, it is optimal, being saturated by the following model (with $\mathcal{H}=\mathbb{C}^4$):
\begin{equation}\label{maximalDiff}
\E=
\begin{bmatrix}
c_s & (w^2{-}c_s^2)^{1/2} & 0 & 0 \\
(w^2{-}c_s^2)^{1/2} & -c_s & 0 & 0 \\
0 & 0 & -c_s & -(w^2{-}c_s^2)^{1/2} \\
0 & 0 & -(w^2{-}c_s^2)^{1/2} & c_s \\
\end{bmatrix} \, ,
\qquad\qquad
\s =
\begin{bmatrix}
0 & 0 & 0 & 0 \\
0 & 1/\tau_g & 0 & 0 \\
0 & 0 & 0 & 0 \\
0 & 0 & 0 & 1/\tau_g \\
\end{bmatrix} \, .
\end{equation}

Matching the expansion $i\omega=\pm c_s(ik)-\mathfrak{D}(ik)^2+\cdots$ to the corresponding hydrodynamic dispersion relation allows one to rewrite the acoustic diffusivity in terms of the shear and bulk viscosities $\eta$ and $\zeta$. One then obtains
\begin{equation}\label{soundboundwow}
c_s^2+
\dfrac{\zeta+\frac43\eta}{2(\varepsilon+P)\tau_g}
\leq
w^2
\leq
1 \, ,
\end{equation}
where $\varepsilon{+}P$ is the enthalpy density. Causal bounds relating first-order transport coefficients to the speed of sound had previously been proposed based on Israel--Stewart theory \cite{Hippert:2024hum}, but no rigorous result had been established for general kinetic-type theories. Notably, the constraint obtained from the characteristic analysis of Israel--Stewart is too strong:
\begin{equation}
c_s^2+
\dfrac{\zeta+\frac43\eta}{(\varepsilon+P)\tau_g}
\leq
w^2
\leq
1
\qquad\qquad
(\text{Israel--Stewart bound}) \, .
\end{equation}

In Fig.~\ref{fig:MaximalSound}, we compare the spectrum of the maximally viscous model \eqref{maximalDiff} with that of Israel--Stewart theory.

\begin{figure}[h!]
    \centering
\includegraphics[width=0.39\linewidth]{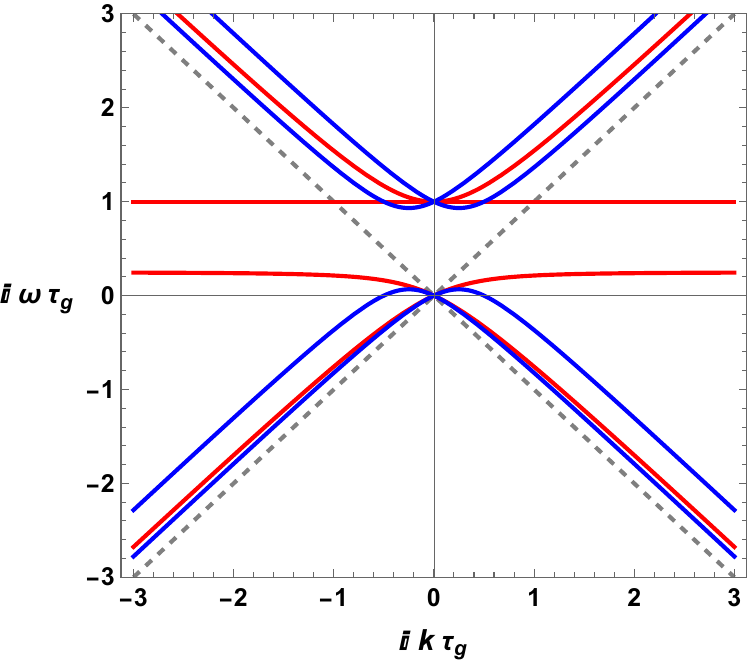}\hspace{0.08\linewidth}
\includegraphics[width=0.39\linewidth]{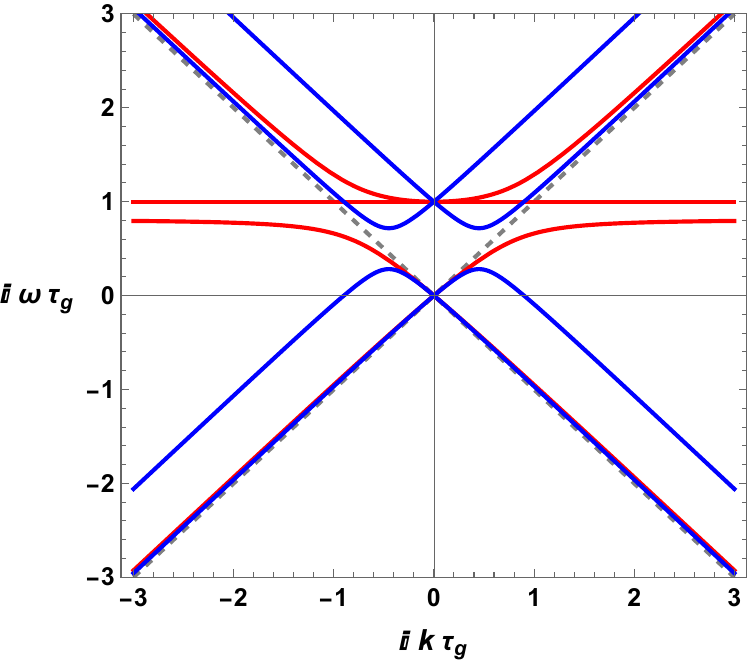}
\caption{Comparison between the sound sector of the maximally viscous model \eqref{maximalDiff} (blue) and Israel--Stewart theory with bulk and shear viscosity (red), for $\{w,c_s\}=\{1,0.5\}$ (left panel) and $\{1,0.9\}$ (right panel). For Israel--Stewart theory, we set both relaxation times equal to $\tau_g$. With this choice, the spectrum is insensitive to how the total diffusivity is distributed between bulk and shear viscosity at fixed $w$, and one always obtains $\mathfrak{D}=\frac12(w^2-c_s^2)\tau_g$. Thus, Israel--Stewart theory realizes only half of the maximal diffusivity allowed by causality, whereas the model \eqref{maximalDiff} saturates the bound $\mathfrak{D}\leq (w^2-c_s^2)\tau_g$ exactly. The model \eqref{maximalDiff} is maximally viscous because the operator $\E$ has only the eigenvalues $\pm w$, so $\langle \E^2\rangle=w^2$. Hence, the sound eigenstates have the maximal possible velocity variance, $\langle \E^2\rangle-\langle \E\rangle^2=w^2-c_s^2$, and therefore undergo maximal spatial spreading.}
    \label{fig:MaximalSound}
\end{figure}

\subsection{Shear and tensor waves}
\vspace{-0.3cm}

In an ordinary medium in $3+1$ dimensions, the total Hilbert space $\mathcal{H}$ naturally decomposes into spin sectors (sometimes called channels) corresponding to excitations that transform differently under rotations around the $x$ axis:
\begin{equation}
\mathcal{H}
=
\bigoplus_{m=-\infty}^{\infty}\mathcal{H}_m \, ,
\qquad\qquad
\left(
\mathcal{H}_m
\to
e^{im\varphi}\mathcal{H}_m
\text{ under a rotation by angle }\varphi
\right) \, .
\end{equation}
If a state $\Psi$ belongs to a given channel, the associated perturbation $\delta T^{\mu\nu}$ of the energy-momentum tensor must transform with the corresponding phase under rotations in the $yz$ plane. This enforces the following algebraic structure:
\begin{equation}\label{sectorialization}
\begin{split}
m=0
\qquad \Longrightarrow \qquad
&
\delta T^{\mu\nu} =
\begin{bmatrix}
A & B & 0 & 0 \\
B & C & 0 & 0 \\
0 & 0 & D & 0 \\
0 & 0 & 0 & D \\
\end{bmatrix}
\, ,
\\
m=\pm1
\qquad \Longrightarrow \qquad
&
\delta T^{\mu\nu} =
\begin{bmatrix}
0 & 0 & A & \pm iA \\
0 & 0 & B & \pm iB \\
A & B & 0 & 0 \\
\pm iA & \pm iB & 0 & 0 \\
\end{bmatrix}
\, ,
\\
m=\pm2
\qquad \Longrightarrow \qquad
&
\delta T^{\mu\nu} =
\begin{bmatrix}
0 & 0 & 0 & 0 \\
0 & 0 & 0 & 0 \\
0 & 0 & A & \pm iA \\
0 & 0 & \pm iA & -A \\
\end{bmatrix}
\, .
\\
\end{split}
\end{equation}
From this decomposition, one sees that sound waves belong to $\mathcal{H}_0$, shear waves belong to $\mathcal{H}_{\pm1}$, and tensor waves belong to $\mathcal{H}_{\pm2}$. Higher-spin sectors contain excitations with no hydrodynamic counterpart, since $\delta T^{\mu\nu}=0$ for $|m|>2$.

In a fluid, there is one single hydrodynamic mode within each sector $\mathcal{H}_{\pm1}$, arising from the conservation of transverse momentum. Thus, the mode $(i\omega,ik)=(0,0)$ is non-degenerate within these sectors. Assuming invariance under the reflection $x\mapsto-x$, we are back in the setting of Sec.~\ref{diffusionofasignlecharge}, and Theorem \ref{theo3} applies to shear waves as well. By contrast, in a solid, which can also be described using theories of the form \eqref{Boltzmann} \cite{GavassinoUniveraalityI2023odx}, additional conservation laws associated with elastic memory give rise to new hydrodynamic excitations. In particular, there are now two hydrodynamic modes within each sector $\mathcal{H}_{\pm1}$. Consequently, shear waves in solids satisfy the properties stated in Theorem \ref{theo4}.

The tensor sectors $\mathcal{H}_{\pm2}$ cannot contain hydrodynamic modes in fluids, since no relevant conservation law exists there. Usually, solids do not contain hydrodynamic tensor waves either, because the relevant hydrodynamic degree of freedom is the displacement of the volume elements, which transforms as a spin-1 quantity under rotations.

\subsection{Bounds on the shear viscosity}
\vspace{-0.2cm}

Applying the bound \eqref{Dwtau} to the shear channel of relativistic fluids yields an upper bound on the shear viscosity:
\begin{equation}\label{etaoverepsplusp}
\dfrac{\eta}{\varepsilon+P}
\leq
w^2\tau_g
\qquad\qquad
(\text{for fluids}) \, ,
\end{equation}
which is saturated by Israel--Stewart theory. Here, $\tau_g$ may be identified with the non-hydrodynamic gap of $\s$ restricted to $\mathcal{H}_{\pm1}$, and is, in general, different from the $\tau_g$ entering \eqref{soundboundwow}, which corresponds instead to the gap in the sound channel $\mathcal{H}_0$.

We also note that additional information about the microscopic kinematics of the theory can considerably refine the bound \eqref{etaoverepsplusp} through \eqref{estimiamo}. Consider, for example, an ideal non-degenerate ultrarelativistic gas described within kinetic theory. Then, $\mathcal{H}=L^2(\mathbb{R}^3)$, $\E=p^x/p^0$ (so $w=1$), and $\Psi(p^x,p^y,p^z)$ is related to the perturbation of the distribution function through $f=f_{\text{eq}}+\sqrt{f_{\text{eq}}}\Psi$ \cite{DudynskiEkielJezewska1985}. A hydrodynamic shear wave with helicity $+1$ at $ik=0$ is then represented by $\Psi=\sqrt{f_{\text{eq}}}(p^y+ip^z)$. Taking into account that this state is not normalized, we obtain
\begin{equation}
\begin{split}
\dfrac{\eta}{\varepsilon+P}
&=
\dfrac{\left(\mathbb{P}\E\Psi,(\s_{ik}-i\omega)^{-1}_{\Psi^\perp}\mathbb{P}\E\Psi \right)}{(\Psi,\Psi)}
\leq
\tau_g
\dfrac{||\E\Psi||^2}{||\Psi||^2}
\\
&=
\tau_g
\dfrac{||\sqrt{f_\text{eq}}(p^y+ip^z)p^x/p^0||^2}
{||\sqrt{f_\text{eq}}(p^y+ip^z)||^2}
\\
&=
\tau_g
\dfrac{\int_{4\pi} |\Omega^y+i\Omega^z|^2 (\Omega^x)^2 d^2\Omega}
{\int_{4\pi} |\Omega^y+i\Omega^z|^2 d^2\Omega}
=
\dfrac{\tau_g}{5}
\qquad\qquad
(\text{for ideal ultrarelativistic gases}) \, .
\\
\end{split}
\end{equation}
This kinetic upper bound was first derived in \cite{Ghiglieri:2018dgf}. Here, we see that it arises naturally as a particular instance of a more general bound on the concavity of dispersion relations on the relaxation plane.

An interesting consequence of the bound \eqref{etaoverepsplusp} is the following. Consider a fluid with vanishing chemical potential, so that $\varepsilon+P=Ts$, where $T$ is the temperature and $s$ the entropy density. Then, assuming the Kovtun--Son--Starinets (KSS) bound \cite{KovtunSonStarinets:2004de}, $\eta/s\geq 1/(4\pi)$, and combining it with \eqref{etaoverepsplusp} together with causality ($w\leq1$), one obtains a lower bound on the first non-hydrodynamic relaxation time:
\begin{equation}
\tau_g
\geq
\dfrac{1}{4\pi T}
\qquad\qquad
(\text{for KSS-respecting fluids}) \, .
\end{equation}
It should be emphasized, however, that this bound applies only to systems with purely relaxational spectra, since it relies on the representation \eqref{Boltzmann}. Systems close to saturating the KSS bound are often described holographically, where the non-hydrodynamic spectrum is typically oscillatory \cite{KovtunHolography2005,Heller2014}, and the above inequality need not apply. Nevertheless, we are not aware of explicit counterexamples in fluids.

\vspace{-0.2cm}
\section{Morphology of the non-hydrodynamic sector}
\vspace{-0.2cm}

A longstanding question in relativistic hydrodynamics is the formulation of general criteria to determine whether the spectrum of a given system consists only of discrete eigenvalues (``poles'') or also contains continuous branches (``cuts'') \cite{Moore:2018mma,Kurkela:2017xis,RochaBranchcut:2024cge,GavassinoGapless:2024rck}. In kinetic theory, this question is nontrivial even in the simplest models (such as RTA) where a discrete non-hydrodynamic mode at $ik=0$ can open into a continuous branch at finite $ik$, signaling ballistic propagation \cite{RomatschkeCutsandPoles:2015gic,Bajec:2025dqm}. In relativistic systems, this phenomenon is especially significant because a branch that opens at finite $ik$ in the rest frame also opens at finite $v$ in Milne coordinates (see Fig.~\ref{fig:DilationFails}). In this situation, a mode that appears pole-like at zero wavenumber in one frame may already correspond to a cut at zero wavenumber in another.

In this section, we derive general results on the structure of the non-hydrodynamic sector.

\vspace{-0.2cm}
\subsection{A no-go theorem for cuts at finite wavenumber}
\vspace{-0.2cm}

We have the following no-go result for the existence of a continuous ballistic branch at finite wavenumber.

\begin{theorem}\label{theo5}
Consider a system governed by \eqref{Boltzmann}. Suppose that, at $ik=0$, the spectrum consists only of discrete eigenvalues with finite multiplicity, and has no finite accumulation point. Then, the same is true for all $ik\in\mathbb{C}$.
\end{theorem}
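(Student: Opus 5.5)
The plan is to recast the hypothesis as a statement about the resolvent and then show that a bounded perturbation cannot destroy it. For a self-adjoint operator $\s$, having a spectrum made only of isolated eigenvalues of finite multiplicity with no finite accumulation point is equivalent to $\s$ having compact resolvent. Concretely, $(\s-\lambda)^{-1}$ is compact for some, and then every, $\lambda$ in the resolvent set. One direction is the spectral theorem. On an eigenbasis with eigenvalues $\lambda_n$ satisfying $|\lambda_n|\to\infty$, the resolvent is diagonal with entries $(\lambda_n-\lambda)^{-1}\to0$, so it is a norm limit of finite-rank operators. For the reverse direction, the spectrum of an operator with compact resolvent is well known to be discrete with finite multiplicities and no finite accumulation point (Kato, \S 3.6.8 / \S 4.3). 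This holds even when the operator is not self-adjoint, which matters here because $\s_{ik}=\s+ik\E$ is not self-adjoint for complex $ik$. Since $\E$ is bounded, $\s_{ik}$ is closed with the same domain as $\s$ for every $ik\in\mathbb{C}$, so the argument reduces to showing that $\s_{ik}$ has compact resolvent.

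The key step is the second resolvent identity. Take $\lambda$ in the resolvent set of both $\s$ and $\s_{ik}$. Then
\begin{equation}
(\s_{ik}-\lambda)^{-1}=(\s-\lambda)^{-1}-ik\,(\s_{ik}-\lambda)^{-1}\E\,(\s-\lambda)^{-1} \, .
\end{equation}
The first term on the right is compact by hypothesis. The second is a product of bounded operators with the compact operator $(\s-\lambda)^{-1}$, so it is also compact. Compactness of $(\s_{ik}-\lambda)^{-1}$ then follows immediately.

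The main obstacle is showing that such a common $\lambda$ exists, i.e.\ that the resolvent set of $\s_{ik}$ is nonempty. Without this, one could in principle have $\text{Spectrum}(\s_{ik})=\mathbb{C}$. I would handle it with a Neumann series. Because $\s$ is self-adjoint, $\|(\s-\lambda)^{-1}\|=1/\text{dist}(\lambda,\text{Spectrum}(\s))$, and choosing $\lambda=i\mu$ with $\mu$ real and large gives $\|(\s-i\mu)^{-1}\|\leq1/|\mu|$. Writing $\s_{ik}-i\mu=(1+ik\E(\s-i\mu)^{-1})(\s-i\mu)$, the first factor is invertible once $|ik|\,w/|\mu|<1$. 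So for $|\mu|>w|ik|$ the point $i\mu$ lies in the resolvent set of $\s_{ik}$. This gives the needed $\lambda$, compactness of the resolvent follows, and with it the discreteness of the spectrum of $\s_{ik}$ for all $ik\in\mathbb{C}$.

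Finally, I would note that there is an alternative route through the analytic family structure of $ik\mapsto\s_{ik}$, which is a holomorphic family of type (A) in Kato's terminology. Compactness of the resolvent is preserved along such families (Kato, \S 7.2), and this yields the same conclusion. I would also remark that the result is consistent with Theorem~\ref{theo1}: at real $ik$ the eigenvalues can move by at most $w|ik|$, so no accumulation point can form at finite $ik$.
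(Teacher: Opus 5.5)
Your proof is correct and takes the paper's route: you reduce the hypothesis to compactness of the resolvent of $\s$, secure a point of the resolvent set of $\s_{ik}$ using self-adjointness of $\s$ and boundedness of $\E$, and then propagate compactness. Where the paper cites Kato (Problem~V.4.8 for the nonempty resolvent set, Theorem~VII.2.4 for holomorphic families of type (A)), you do both steps by hand, with the Neumann series for $|\mu|>w|ik|$ and the second resolvent identity. Drop the closing claim that Theorem~\ref{theo1} by itself prevents accumulation points: a Hausdorff-distance bound cannot rule out eigenvalues thickening into bands, so the compact-resolvent argument alone carries the proof.
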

\begin{proof}
The assumptions are equivalent to the statement that $\s$ has compact resolvent. Since $\E$ is bounded, the operators $\s_{ik}$ form a holomorphic family of type (A). Moreover, each $\s_{ik}$ is closed and has non-empty resolvent set by \cite[\S 5.4.3, Problem 4.8]{Kato_Perturbation_Theory}. Hence, by \cite[\S 7.2.1, Theorem 2.4]{Kato_Perturbation_Theory}, every operator $\s_{ik}$ also has compact resolvent.
\end{proof}

\subsection{Necessity of ballistic cuts in RTA-type theories}
\vspace{-0.3cm}

RTA kinetic theory escapes the assumptions of Theorem \ref{theo5} (see Fig.~\ref{fig:DeathOfDiffusion}) because the RTA non-hydrodynamic mode $(i\omega,ik)=(1/\tau,0)$ is an eigenvalue with infinite multiplicity. Indeed, an infinitely degenerate relaxation mode necessarily opens into a ballistic cut whenever the propagation operator has continuous essential spectrum.

\begin{theorem}\label{theo6}
Consider a system governed by \eqref{Boltzmann}. Suppose that, at $ik=0$, the spectrum consists of a finite number of discrete eigenvalues, exactly one of which, denoted by $\lambda$, has infinite multiplicity. Suppose moreover that the operator $\E$ has essential spectrum $[-w,w]$. Then, for finite $ik\in\mathbb{C}$, the eigenvalue $\lambda$ opens into the ballistic branch $\lambda+ik[-w,w]$.
\end{theorem}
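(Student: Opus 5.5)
The plan is to split $\s$ into a finite-rank part plus a multiple of the identity, and then transfer spectral information through Weyl's theorem on the invariance of the essential spectrum. By hypothesis, $\s$ is self-adjoint with finitely many eigenvalues $\lambda_1,\dots,\lambda_n$ and $\lambda$. All the $\lambda_j$ have finite multiplicity, and only $\lambda$ has infinite multiplicity. The spectral theorem then gives
\begin{equation}
\s=\lambda\,\mathbb{I}+\mathbb{K}\, , \qquad \mathbb{K}=\sum_j(\lambda_j-\lambda)\,\Pi_j\, ,
\end{equation}
where $\Pi_j$ are the finite-rank spectral projectors of the other eigenvalues. In particular, $\s$ is bounded and $\mathbb{K}$ has finite rank, hence is compact. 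Consequently,
\begin{equation}
\s_{ik}=\lambda+ik\E+\mathbb{K}\, ,
\end{equation}
so for every $ik\in\mathbb{C}$ the operator $\s_{ik}$ is a compact perturbation of $\lambda+ik\E$.

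The first step is to compute the essential spectrum of the unperturbed part. By the spectral mapping theorem for the self-adjoint operator $\E$, we have $\text{ess spec}(\lambda+ik\E)=\lambda+ik\,\text{ess spec}(\E)=\lambda+ik[-w,w]$. For $ik\neq0$ this is a genuine segment, of positive length since $w>0$.

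The second step is to invoke Weyl's theorem, in its version for closed or bounded, not necessarily normal, operators: compact perturbations preserve the essential spectrum, for instance in the Fredholm sense $\sigma_{e}$ (see Kato, \S 4.5.6). This yields $\text{ess spec}(\s_{ik})=\lambda+ik[-w,w]$. For real $ik$, $\s_{ik}$ is self-adjoint and all standard definitions of essential spectrum coincide, so the segment is a continuous branch of the spectrum, made of approximate eigenvalues and not isolated finite-multiplicity eigenvalues. This is the ballistic cut. It contains $\lambda$ and is centered there, which is what ``$\lambda$ opens into'' means. The remaining finitely many eigenvalues $\lambda_j$ can only contribute discrete points outside the cut, or eigenvalues embedded in it, consistent with Theorem \ref{theo1}.

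The main obstacle is complex $ik$, where $\s_{ik}$ is no longer normal. There the different notions of essential spectrum, Fredholm versus Weyl versus Browder, can differ, and one must make sure the segment is actually in the spectrum rather than only that the index or Fredholm properties fail. I would handle it as follows. The operator $ik\E+\lambda$ is normal, so its approximate point spectrum contains the segment: pick a Weyl sequence $\phi_n$ for $\E$ at $\mu\in[-w,w]$, orthonormal and weakly null. Compactness of $\mathbb{K}$ gives $\mathbb{K}\phi_n\to0$, so $\phi_n$ is also a Weyl sequence for $\s_{ik}$ at $\lambda+ik\mu$. Hence the whole segment lies in the spectrum of $\s_{ik}$ for every $ik\in\mathbb{C}$. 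This direct argument avoids the subtleties of the abstract theorem.

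Conversely, outside the segment, $\s_{ik}-z$ is Fredholm of index zero, being a compact perturbation of an invertible operator. Analytic Fredholm theory then shows that the spectrum there consists of isolated eigenvalues of finite multiplicity, so no other continuous branches arise.
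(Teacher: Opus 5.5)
Your proposal is correct and follows the paper's proof: you write $\s=\lambda+\mathbb{K}$ with $\mathbb{K}$ finite-rank, so that $\s_{ik}$ is a compact perturbation of $\lambda+ik\E$, and you invoke the invariance of the essential spectrum under compact perturbations (Kato \S 4.5.6). Your Weyl-sequence check for complex $ik$ and your analytic-Fredholm argument ruling out further continuous spectrum are both valid additions, though the first is not strictly needed, since every standard notion of essential spectrum is already contained in the spectrum.
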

\begin{proof}
Under the above assumptions, $\s$ differs from $\lambda$ by a finite-rank operator: $\s=\lambda+\mathbb{K}$, with $\mathbb{K}$ compact. Hence, $\s_{ik}=\lambda+ik\E+\mathbb{K}$. Since compact perturbations do not change the essential spectrum \cite[\S 4.5.6, Theorem 5.35]{Kato_Perturbation_Theory}, the essential spectrum of $\s_{ik}$ coincides with that of $\lambda+ik\E$, namely $\lambda+ik[-w,w]$.
\end{proof}

To apply this result to relativistic RTA kinetic theory, one only needs to observe that $\E=p^1/p^0$ is a multiplication operator. Its essential spectrum is $[-1,1]$, giving $\text{Ess-Spectrum}(\s_{ik})=1/\tau+ik[-1,1]$ (this is true also for $ik$ complex).

\vspace{-0.4cm}
\subsection{The non-hydrodynamic sector of the Boltzmann equation}
\vspace{-0.3cm}

Inspired by the discussion above, we now prove a result tailored specifically to kinetic-theory applications.

\begin{theorem}\label{theo7}
Consider a system governed by \eqref{Boltzmann}, with $\mathcal{H}=L^2(\mathcal{M})$ for some smooth manifold $\mathcal{M}$. Suppose that $\s=\mathbb{L}+\mathbb{K}$, where $\mathbb{L}$ is the multiplication operator associated with a function $\ell\in L^\infty(\mathcal{M})$, and $\mathbb{K}$ is a self-adjoint compact operator. Suppose furthermore that $\E$ is the multiplication operator associated with a function $e\in L^\infty(\mathcal{M})$. Then, for every $ik\in\mathbb{C}$, the essential range of the function $\ell+ike$ belongs to the excitation spectrum.
\end{theorem}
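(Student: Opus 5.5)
The plan follows the proof of Theorem \ref{theo6}, with Weyl's theorem on the stability of the essential spectrum under compact perturbations doing the main work. Write
\begin{equation*}
\s_{ik}=\mathbb{L}+ik\E+\mathbb{K}=\mathbb{M}_{ik}+\mathbb{K} \, ,
\end{equation*}
where $\mathbb{M}_{ik}$ is the operator of multiplication by $m_{ik}\equiv\ell+ike\in L^\infty(\mathcal{M})$. Since $\ell$ and $e$ are essentially bounded, $\mathbb{M}_{ik}$ is a bounded normal operator for every $ik\in\mathbb{C}$, and the whole family is bounded. We then have two tasks: identify the essential spectrum of $\mathbb{M}_{ik}$, and show that adding $\mathbb{K}$ does not change it.

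\textbf{Step 1: essential spectrum of the multiplication operator.} For a multiplication operator on $L^2(\mathcal{M})$ (with a non-atomic measure induced by the smooth structure), the spectrum equals the essential range of $m_{ik}$. Every point $z$ of the essential range lies in the essential spectrum. To see this, take the sets
\begin{equation*}
A_n=\{p\in\mathcal{M}: |m_{ik}(p)-z|<1/n\} \, ,
\end{equation*}
which have positive measure by definition of the essential range. Because the measure has no atoms, each $A_n$ can be split into infinitely many disjoint subsets of positive measure. Their normalized indicator functions form an orthonormal sequence $\Psi_j$ with $\|(\mathbb{M}_{ik}-z)\Psi_j\|\to0$, that is, a singular Weyl sequence. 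Since $\Psi_j\rightharpoonup0$, the point $z$ is in $\text{Ess-Spectrum}(\mathbb{M}_{ik})$ in the Weyl sense. Working directly with the Weyl-sequence characterization is useful because $\mathbb{M}_{ik}$ is not self-adjoint for complex $ik$, and this characterization still applies to it.

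\textbf{Step 2: stability under $\mathbb{K}$.} Apply the same sequence to $\s_{ik}$. Because $\mathbb{K}$ is compact and $\Psi_j\rightharpoonup0$, we get $\mathbb{K}\Psi_j\to0$ in norm. Hence
\begin{equation*}
\|(\s_{ik}-z)\Psi_j\|\leq\|(\mathbb{M}_{ik}-z)\Psi_j\|+\|\mathbb{K}\Psi_j\|\to0 \, ,
\end{equation*}
so $\s_{ik}-z$ is not bounded below and $z\in\text{Spectrum}(\s_{ik})$. Alternatively one can quote \cite[\S 4.5.6, Theorem 5.35]{Kato_Perturbation_Theory}, as in Theorem \ref{theo6}. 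The direct Weyl-sequence argument has the advantage of avoiding any question about which definition of essential spectrum is meant for non-normal operators. Finally, by the equivalence proven in Appendix \ref{bbb} (spectrum of $\s_{ik}$ equals the set of excitation modes), $z$ belongs to the excitation spectrum.

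\textbf{Main obstacle.} The delicate point is Step 1 for complex $ik$ and for general measures. We need an infinite orthonormal family concentrated where $m_{ik}\approx z$, not just a single approximate eigenvector. A single approximate eigenvector would survive the compact perturbation only if it could be made to converge weakly to zero. I would handle this through non-atomicity of the Riemannian (or Lebesgue) measure on the smooth manifold $\mathcal{M}$, subdividing $A_n$ as described. If instead one only wants membership in the spectrum, a diagonal choice is enough: pick $\Psi_n$ supported in $A_n$ with disjoint supports, which gives orthonormality and hence weak convergence to zero automatically.
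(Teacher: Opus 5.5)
Your proposal is correct and follows the paper's route: $\s_{ik}$ is the multiplication operator by $\ell+ike$ plus the compact $\mathbb{K}$, and invariance of the essential spectrum under compact perturbations carries the essential range into $\textup{Spectrum}(\s_{ik})$; you prove this invariance directly with a weakly null Weyl sequence instead of citing Kato. Your explicit use of non-atomicity of the measure on $\mathcal{M}$ tidies the paper's argument, which silently passes from the essential spectrum of the multiplication operator to its full spectrum. Just build the Weyl sequence diagonally, with disjoint supports $B_j\subset A_{n_j}$ of finite positive measure as in your final remark, because subdividing a single $A_n$ only gives $\|(\mathbb{M}_{ik}-z)\Psi_j\|\leq 1/n$.
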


\begin{proof}
Since $\mathbb{K}$ is compact, the essential spectrum of $\s_{ik}$ coincides with that of $\mathbb{L}+ik\E$. The latter, being a multiplication operator, has spectrum equal to the essential range of its associated function \cite[\S IX.2.6]{conway1990functional}.
\end{proof}

To appreciate the significance of this result, consider an ideal relativistic gas in $3+1$ dimensions with perturbed distribution function $f=f_\text{eq}+\sqrt{f_\text{eq}}\,\Psi$. Then, $\mathcal{H}=L^2(\mathbb{R}^3)$ and $\E=p^1/p^0$ is a multiplication operator. Moreover, it is well known \cite{DudynskiEkielJezewska1985} that Boltzmann's $2$-to-$2$ collision operator can always be written in the form $\s=\tau(p^0)^{-1}+\mathbb{K}$, where $\mathbb{K}$ is compact and $\tau(p^0)$ is a (generically energy-dependent) relaxation time \cite{DudynskiEkielJezewska1985}. Applying Theorem \ref{theo7}, we immediately conclude (see also \cite{Dudynski1989Hydro}) that the whole region
\vspace{-0.2cm}
\begin{equation}
i\omega
\in
\bigcup_{p^0}
\left\{
\tau(p^0)^{-1}
\right\}
+
ik[-1,1]
\end{equation}
belongs to the excitation spectrum.
Thus, at $ik=0$, the system generically possesses a cut spanning a continuum of relaxation times. For real $ik$, this cut simply expands at speed $1$ across the relaxation plane. For real $k$, instead, it becomes a two-dimensional non-analyticity region in the complex $\omega$ plane extending over the interval $\mathfrak{Re}\omega\in[-k,k]$. This shows that the Boltzmann collision kernel generically admits ballistic propagation at all velocities. Of course, this does not exhaust the full spectrum, since the compact operator $\mathbb{K}$ may generate additional discrete eigenvalues.

\vspace{-0.5cm}
\section{A case study}
\vspace{-0.3cm}

We conclude the list of applications by discussing a relativistic kinetic theory in $2+1$ dimensions that can be solved analytically, yet behaves very differently from standard RTA-like and Boltzmann-like models, in that it possesses no ballistic branch cut. This concrete example will illustrate how an $\{i\omega,ik\}$-plane analysis can be used to extract nontrivial information about the linear-response properties of relativistic matter.

\vspace{-0.4cm}
\subsection{Physical setup}
\vspace{-0.3cm}

We consider a gas of photons with fixed energy propagating on a plane. Their velocity may be parameterized as $v^j=(\cos\theta,\sin\theta)$. Suppose that these photons undergo elastic scattering with a medium, so that their energy is conserved, while their propagation angle $\theta$ changes over time. In the limit where the scattering processes become infinitely soft but infinitely frequent, the angle $\theta$ of each photon undergoes Brownian motion, and the Boltzmann equation for the radiative intensity $I(t,x,y,\theta)$ reads\footnote{In this setup, the radiative intensity $I$ is simply the number of photons per unit area and unit angle, $dN/(dx\,dy\,d\theta)$. The left-hand side of \eqref{booltzmannPhotons} is then just Liouville's theorem. In fact, suppose that no collisions occur. Then, in a time interval $\Delta t$, a free-streaming photon located at $\Gamma=(x,y,\theta)$ in phase space moves to the point $\Gamma(\Delta t)=(x+\cos\theta\,\Delta t,y+\sin\theta\,\Delta t,\theta)$. The vector field generating this flow, $\Dot{\Gamma}=(\cos\theta,\sin\theta,0)$, is divergence-free. Hence, the phase-space conservation law $\partial_t I+\partial_a(\Dot{\Gamma}^a I)=0$ reduces to $\partial_t I+\Dot{\Gamma}^a\partial_a I=0$.}
\begin{equation}\label{booltzmannPhotons}
\partial_t I
+\cos(\theta)\partial_x I
+\sin(\theta)\partial_y I
=
\dfrac{1}{\tau}\partial_\theta^2 I \, ,
\end{equation}
where $\tau$ is a characteristic scattering timescale.

\newpage
In this setting, the Hilbert space is naturally $\mathcal{H}=L^2(\text{circle})$, equipped with inner product $(I,J)=\int_{-\pi}^{\pi}\frac{d\theta}{2\pi}I^*J$. Taking the inner product of \eqref{booltzmannPhotons} with either $1$ or $I$, one obtains respectively the conservation of the number of photons and a free-energy dissipation law:
\begin{equation}
\begin{split}
&\partial_t (1,I)
+\partial_x(\cos\theta,I)
+\partial_y(\sin\theta,I)
=
0 \, ,
\\
&\partial_t \dfrac{(I,I)}{2}
+\partial_x \dfrac{(I,\cos\theta I)}{2}
+\partial_y \dfrac{(I,\sin\theta I)}{2}
=
-\dfrac{1}{\tau}(\partial_\theta I,\partial_\theta I)
\leq
0 \, .
\\
\end{split}
\end{equation}
These identities immediately imply that the system possesses at least one hydrodynamic mode and no unstable-like modes. Notably, the hydrodynamic mode is necessarily unique. Indeed, the free-energy law shows that any state with $(i\omega,ik)=(0,0)$ must satisfy $(\partial_\theta I,\partial_\theta I)=0$, which implies that $I$ is constant in $\theta$. Hence, the mode $(i\omega,ik)=(0,0)$ is non-degenerate. Combined with isotropy, this places the system precisely in the setting of Theorem \ref{theo3}: the photons undergo diffusion.

\vspace{-0.1cm}
\subsection{Determination of the spectrum}
\vspace{-0.1cm}

Comparing \eqref{booltzmannPhotons} with \eqref{Boltzmann}, we immediately identify $\s=-\tau^{-1}\partial_\theta^2$ and $\E=\cos(\theta)$. Both operators are manifestly self-adjoint, and $\E$ has norm $w=1$. Determining the spectrum therefore amounts to solving
\begin{equation}
-\partial_\theta^2 I
+
ik\tau \cos(\theta) I
=
i\omega\tau I \, ,
\qquad\qquad
\text{with }
\{\theta=\pi\}\sim\{\theta=-\pi\} \, .
\end{equation}
This is precisely the Schr\"odinger equation for a particle confined to the unit circle in the $xy$ plane, with kinetic energy $L_z^2$ and potential energy proportional to $x$. At $ik=0$, the eigenvalues are $i\omega_n\tau=n^2$ ($n=0,1,2,\ldots$, so $\tau_g=\tau$), with eigenfunctions $\cos(n\theta)$ and $\sin(n\theta)$. Hence, all excited states (i.e. all non-hydrodynamic modes) are twofold degenerate.
Turning on a finite $ik$ lifts this degeneracy (note that Theorem \ref{theo5} applies), yielding
\begin{equation}
\begin{cases}
i\omega_n^{c}\tau=\frac{1}{4} a_{2n}(2ik\tau) \, , \\
I_n^c (\theta) =\text{ce}_{2n}(\theta/2,2ik\tau) \, ,
\end{cases}
\qquad \qquad
\begin{cases}
i\omega_n^{s}\tau=\frac{1}{4} b_{2n}(2ik\tau) \, , \\
I_n^s (\theta) =\text{se}_{2n}(\theta/2,2ik\tau) \, ,
\end{cases}
\end{equation}
where $\text{ce}_{2n}$ and $\text{se}_{2n}$ are respectively the even and odd Mathieu functions of the first kind, while $a_{2n}$ and $b_{2n}$ are the associated Mathieu characteristic numbers. Note that the mode $n=0$ exists only in the cosine family, whereas the sine family starts at $n=1$. The resulting spectrum is shown in Fig.~\ref{fig:PhotonsPLane}.

\begin{figure}[h!]
    \centering
\includegraphics[width=0.38\linewidth]{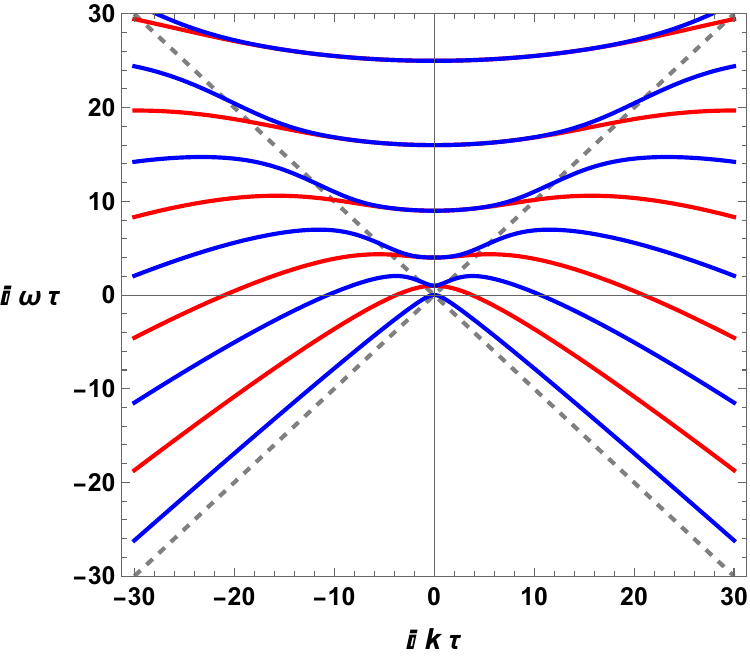}\hspace{0.08\linewidth}
\includegraphics[width=0.38\linewidth]{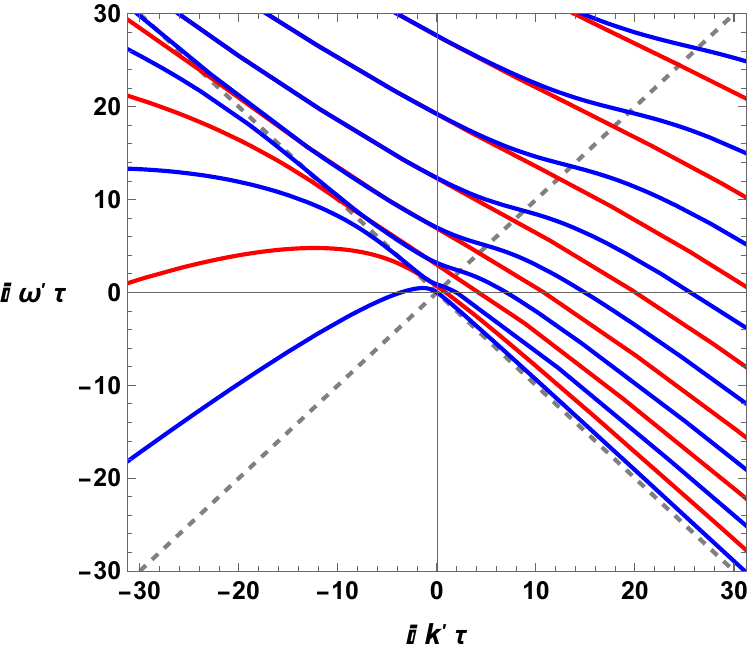}
\caption{Dispersion relations of the model \eqref{booltzmannPhotons}, describing photons propagating on a plane and undergoing elastic scattering. \textit{Left panel}: Excitations in the rest frame. The system possesses a single diffusive mode, together with an infinite tower of non-hydrodynamic modes that are twofold degenerate at $ik=0$, and split at finite $ik$ into two families: one even under $\theta\mapsto-\theta$ (blue), and one odd (red). The absence of a continuous branch of the form $i\omega\in \text{const}+[-ik,ik]$ is a consequence of Theorem \ref{theo5}, and indicates that this kinetic model does not possess the ballistic spectral structure characteristic of most kinetic theories. \textit{Right panel}: Spectrum in a reference frame moving with velocity $v=0.7$. Examining the modes with $i\omega'=0$, we find that the hydrodynamic bow wave (i.e. the point where the hydrodynamic mode intersects the $ik'$ axis to the left of the origin) lies farther from the origin than the first non-hydrodynamic wake. Hence, sufficiently fast perturbing objects do not experience a clean separation between hydrodynamic and non-hydrodynamic scales.}
    \label{fig:PhotonsPLane}
\end{figure}

\subsection{Small- and large-wavenumber limits}

Since the hydrodynamic dispersion relation is known analytically, all transport coefficients can be obtained by Taylor-expanding around the origin:
\begin{equation} i\omega^c_0\tau= -\frac{1}{2} (ik\tau)^2+\frac{7}{32} (ik\tau)^4-\frac{29 }{144} (ik\tau)^6+\frac{68687}{294912} (ik\tau)^8-\frac{123707}{409600} (ik\tau)^{10}+... \, . \end{equation}
Hence, the diffusion coefficient is $\mathfrak{D}=\tau/2$, precisely half of the maximal allowed value, while the super-Burnett coefficient is $\mathfrak{B}=7\tau^3/32$, and so on \cite{StruchtrupSuperBurnett2005,Struchtrup2011Review}. The non-hydrodynamic dispersion relations can likewise be expanded in Taylor series, although their coefficients do not admit a simple interpretation in terms of hydrodynamic transport.

The large-$ik$ behavior of the spectrum follows from a simple quantum-mechanical argument. As $ik\tau<0$ grows in magnitude, the potential $ik\tau\cos(\theta)$ becomes very deep, and the wavefunctions localize near $\theta=0$. Expanding around this point gives the harmonic-oscillator Hamiltonian $-\partial_\theta^2+|ik\tau|(\theta^2/2-1)$, and therefore
\begin{equation}
i\omega_n^{c,s}\tau \approx -|ik\tau|+\sqrt{|2ik\tau|} (2n\pm 1/2) \, , 
\end{equation}
which we have also verified numerically. 

\subsection{Application: Wake behind a luminal pulse}

Let us now discuss a simple physical phenomenon that can be analyzed directly on the $\{i\omega,ik\}$ plane. Consider an impulsive force front propagating along the lightlike surface $x=t$. This may represent, for example, an electromagnetic or gravitational wave. Such a pulse perturbs the medium and generates a cloud of photons on the front, which then diffuse behind it. Discarding transient effects, we may therefore adopt the ansatz $I(t,x,\theta)
=
\text{const}
+
\delta I(x-t,\theta)\Theta(t-x)$,
where $\delta I(x-t,\theta)$ solves \eqref{booltzmannPhotons} in the region $x<t$, and $\delta I(0,\theta)$ specifies the distribution of photons generated at the pulse front.
Assuming equilibrium at $x=-\infty$, together with invariance under reflections of the $y$ axis, the general solution takes the form
\begin{equation}\label{deltaIII}
\delta I
=
\sum_{n=0}^\infty
c_n I_n^c(\theta)e^{ik_n(x-t)}
\qquad\qquad
(\text{for }x<t) \, ,
\end{equation}
where the $ik_n$ are the solutions of the equation $ik_n=i\omega_n^c$, namely the wavenumbers at which the blue curves in Fig.~\ref{fig:PhotonsPLane} intersect the diagonal. The coefficients $c_n$ are fixed by the boundary conditions at the pulse front. 

As a simple model of forward emission, we take $c_n\,{=}\,1/(\pi\sqrt{2})$.
The resulting series converges for $x\,{<}\,0$, and describes a profile in which the photons are emitted approximately comoving with the front (i.e. with $\theta\sim0$), see Fig.~\ref{fig:PhotonsIsotorpization}.

\begin{figure}[h!]
    \centering
\includegraphics[width=0.42\linewidth]{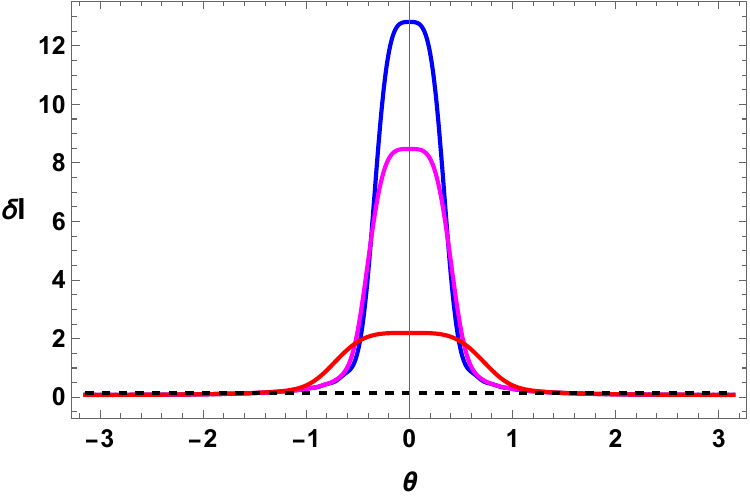}\hspace{0.08\linewidth}
\includegraphics[width=0.43\linewidth]{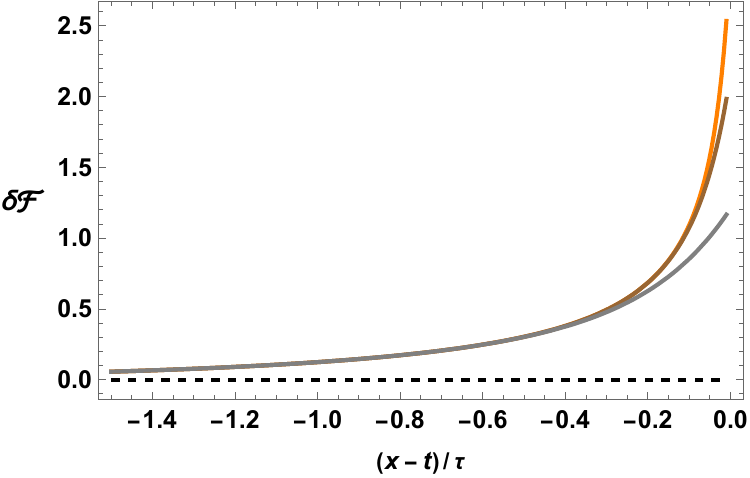}
\caption{Wake generated behind a pulse that perturbs a photon distribution governed by \eqref{booltzmannPhotons}. \textit{Left panel:} Photon distribution \eqref{deltaIII} for $c_n=1/(\pi\sqrt{2})$ (numerically truncated at $n_\text{max}=250$) at the sample points $(x-t)/\tau=-0.0005$ (blue), $-0.001$ (magenta), $-0.01$ (red), and $-\infty$ (dashed). Near the front, the photons propagate almost entirely toward positive $x$, and progressively isotropize farther behind it. At infinity, the distribution approaches a constant isotropic state with perturbed photon density $(1,\delta I)=1$, meaning that the pulse has injected photons into the system. \textit{Right panel:} Profile of the photon flux $\delta\mathcal{F}=(\cos\theta,\delta I)$ for different truncations: $n_\text{max}=0$ (dashed), $2$ (gray), $4$ (brown), and $6$ (orange). Higher-order terms decay on progressively shorter length scales, so including additional modes enhances the flux increasingly close to the front.}
    \label{fig:PhotonsIsotorpization}
\end{figure}

\section{Universality of the geometric bounds}\label{Un9versalityGeometric}

Our proofs and derivations made explicit use of the Boltzmann-like representation \eqref{Boltzmann}. On the other hand, the resulting geometric structures appear disentangled from the underlying dynamical variables. Instead, they seem to depend primarily on two main assumptions: the existence of a maximal signaling speed $w\leq 1$, and the reality of $i\omega$ for real $ik$. This suggests that the Lorentzian geometry uncovered in the previous sections may be more fundamental than the specific representation \eqref{Boltzmann}. We therefore ask whether causal systems with purely relaxational spectra, but not expressed in a kinetic-like form, can evade our geometric constraints. The theorem below shows that, at least in the case of finite-order PDEs, this cannot happen. In fact, we have the following result:

\begin{theorem}\label{theoLax}
Consider a field $\phi:\textup{``Minkowski''}\to \mathbb{C}$, governed by a linear partial differential equation $\mathcal{P}(-\partial_t,\partial_x)\phi=0$, with $\mathcal{P}(y_1,y_2)$ a polynomial with real constant coefficients and finite degree $\mathfrak{n}>0$. Assume that all the rates $i\omega$ are real for real $ik$, and that the coefficient in front of $(-\partial_t)^{\mathfrak{n}}$ is $1$. Then, there exists a system of the form \eqref{Boltzmann} (with~$\mathcal{H}=\mathbb{C}^\mathfrak{n}$) whose dispersion relations and characteristic speeds are the same as those of $\phi$.
\end{theorem}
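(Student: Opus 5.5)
The plan is to reduce the statement to the Helton--Vinnikov theorem, i.e. the Lax conjecture for ternary hyperbolic polynomials. Substituting $\phi\propto e^{ikx-i\omega t}$ turns $-\partial_t$ into $i\omega$ and $\partial_x$ into $ik$. The modes of $\phi$ are therefore the real zeros of $\mathcal{P}(\lambda,\kappa)$, with $\lambda=i\omega$ and $\kappa=ik$. Comparing with \eqref{boltzmannFourier}, a Boltzmann-like system on $\mathbb{C}^{\mathfrak{n}}$ has the same dispersion relations as $\phi$ exactly when
\begin{equation}
\mathcal{P}(\lambda,\kappa)=\det\big(\lambda-\s-\kappa\E\big)
\end{equation}
for some real symmetric (hence self-adjoint) $\mathfrak{n}\times\mathfrak{n}$ matrices $\s$ and $\E$. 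The task is thus to find a symmetric determinantal representation of a monic, real-rooted polynomial.

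First I homogenise. Set $Q(\lambda,\kappa,\mu)=\mu^{\mathfrak{n}}\mathcal{P}(\lambda/\mu,\kappa/\mu)$, a real form of degree $\mathfrak{n}$ with $Q(1,0,0)=1$. I then check that $Q$ is hyperbolic with respect to $e=(1,0,0)$: for every real $(\kappa,\mu)$, the polynomial $t\mapsto Q(t,\kappa,\mu)$ must have only real roots.
\begin{itemize}
\item For $\mu\neq0$, homogeneity gives $Q(t,\kappa,\mu)=\mu^{\mathfrak{n}}\mathcal{P}(t/\mu,\kappa/\mu)$. This is real-rooted by hypothesis, since $\kappa/\mu$ is real.
\item For $\mu=0$, the polynomial reduces to the principal symbol $\mathcal{P}_{\mathfrak{n}}(t,\kappa)$. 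It is the limit as $\mu\to0$ of the monic, degree-$\mathfrak{n}$, real-rooted polynomials $Q(t,\kappa,\mu)$. By Hurwitz's theorem (continuity of roots), it is real-rooted too.
\end{itemize}
I do not expect strict hyperbolicity, because roots may coincide.

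Next I invoke the Helton--Vinnikov theorem, in the form proved by Lewis--Parrilo--Ramana. It states that every hyperbolic form in three variables, normalised by $Q(e)=1$, can be written as $Q(\lambda,\kappa,\mu)=\det(\lambda I+\kappa A+\mu B)$ with real symmetric $A,B$. The non-strictly hyperbolic case follows from the strict one by a limiting argument: approximate $Q$ by strictly hyperbolic forms, and note that the representing matrices stay bounded, because their eigenvalues are roots of $Q(\cdot,\kappa,\mu)$ on the unit circle of $(\kappa,\mu)$. Then pass to a convergent subsequence. Setting $\mu=1$, $\E=-A$ and $\s=-B$ gives the required identity $\mathcal{P}(\lambda,\kappa)=\det(\lambda-\s-\kappa\E)$. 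Hence $\mathrm{Spectrum}(\s+ik\E)$ coincides, with multiplicities, with the set of modes of $\phi$ for every $ik$, real or complex, since a polynomial identity holds everywhere.

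Finally I check the characteristic speeds. Taking the top-degree part of both sides gives $\mathcal{P}_{\mathfrak{n}}(\lambda,\kappa)=\det(\lambda-\kappa\E)$. The characteristic speeds of the scalar PDE are the roots $\lambda/\kappa$ of its principal symbol. For the Boltzmann-like system they are the eigenvalues of $\E$, which are exactly the roots of $\det(\lambda-\kappa\E)$. The two therefore coincide, multiplicities included.

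The main obstacle is the Helton--Vinnikov step: producing real symmetric rather than merely Hermitian matrices, and covering the degenerate (non-strict) case. My plan is to cite the theorem and supply only the compactness argument above. Sign conventions and the role of the monic normalisation are routine bookkeeping.
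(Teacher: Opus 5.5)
Your proposal is correct and follows the paper's proof essentially step for step. It homogenises $\mathcal{P}$, checks hyperbolicity with respect to $(1,0,0)$ (using continuity of roots at $\mu=0$), invokes the proven Lax conjecture (Helton--Vinnikov / Lewis--Parrilo--Ramana), sets the homogenising variable to $1$, and compares principal parts to match the characteristic speeds. Your extra compactness argument for the non-strictly hyperbolic case is harmless but unnecessary. The Lewis--Parrilo--Ramana statement already covers all hyperbolic ternary forms normalised at $(1,0,0)$, and it gives real symmetric, hence self-adjoint, matrices directly.
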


\begin{proof}
By setting $\phi=e^{ikx-i\omega t}$, we find that the dispersion relations of the system are the roots of the polynomial $\mathcal{P}(i\omega,ik)$. Consider the homogeneous polynomial $\mathcal{Q}(X,Y,Z)=Z^\mathfrak{n}\mathcal{P}(X/Z,Y/Z)$. This polynomial is hyperbolic relative to the direction $(1,0,0)$. In fact, $\mathcal{Q}(1,0,0)=\lim_{Z\to 0}Z^{\mathfrak{n}}\mathcal{P}(1/Z,0)=1$. Moreover, the univariate polynomial $\mathcal{R}(T)=\mathcal{Q}(X_0-T,Y_0,Z_0)=Z_0^\mathfrak{n}\mathcal{P}(X_0/Z_0-T/Z_0,Y_0/Z_0)$ has only real roots for any $(X_0,Y_0,Z_0)\in \mathbb{R}^3$, since these correspond to the values of $T$ for which the combination $X_0/Z_0-T/Z_0$ is a frequency $i\omega$ of the system, which must be real, since the corresponding wavenumber $ik=Y_0/Z_0$ is real (reality of $T$ in the case with $Z_0=0$ holds by continuity). Thus, we can apply the Lax Conjecture (which is proven \cite{LaxConjecture2003}), and conclude that there exist two Hermitian $\mathfrak{n}\times\mathfrak{n}$ matrices $\E$ and $\s$ such that $\mathcal{Q}(X,Y,Z)=\det(X-\E Y-\s Z)$. Setting $Z=1$, we obtain
\begin{equation}
\mathcal{P}(i\omega,ik)=\det(i\omega-ik\E-\s) \, .
\end{equation}
Hence, the roots of $\mathcal{P}(i\omega,ik)$ coincide with the values of $i\omega$ for which \eqref{boltzmannFourier} holds, for some $\Psi\in \mathbb{C}^\mathfrak{n}$. Therefore, the dynamical system $\partial_t\Psi=-\s \Psi-\E \partial_x \Psi$ constructed in this way possesses the same dispersion relations as $\phi$. To prove that the characteristics are also the same, we set $X=-\xi_0$, $Y=\xi_1$, and $Z=0$. The result is
\begin{equation}
\lim_{Z\to 0} Z^\mathfrak{n}\mathcal{P}(-\xi_0/Z,\xi_1/Z)=\det(-\xi_0-\xi_1 \E) \, .
\end{equation}
The limit on the left side extracts the characteristic polynomial of $\mathcal{P}(-\partial_t,\partial_x)\phi=0$ (i.e. the homogeneous part of $\mathcal{P}$ with degree $\mathfrak{n}$) \cite[\S 1.4]{rauch2012partial}. The determinant on the right side is the characteristic polynomial of $\partial_t\Psi=-\s \Psi-\E \partial_x \Psi$. Hence, the characteristics must also be the same.
\end{proof}

Let us briefly comment on the assumptions behind this theorem. The fact that we are considering a single scalar $\phi$ does not limit the generality of the result, since a linear system with constant coefficients $\sum_m a_{nm}(\partial_t,\partial_x)\phi_m=0$ can always be reduced to a single PDE $\det[a_{nm}(\partial_t,\partial_x)]\phi=0$ \cite[\S I.2.2]{CourantHilbert2_book} with the same spectrum. Moreover, the case where $\mathcal{P}$ has complex coefficients can be reduced to a system with real coefficients for the functions $\mathfrak{Re}\phi$ and $\mathfrak{Im}\phi$, which can in turn be reduced to a single equation for $\mathfrak{Re}\phi$ (and we can formally complexify $\mathfrak{Re}\phi$). The assumption that the coefficient in front of $(-\partial_t)^{\mathfrak{n}}$ is $1$ automatically holds if the dynamics is causal. In fact, if that coefficient is $0$, the characteristic polynomial $\mathcal{P}_{\mathfrak{n}}(-\xi_0,\xi_1)$ has a factor $\xi_1$, resulting in $t=\text{const}$ being characteristic lines, signaling infinite propagation speed. If that coefficient is a non-zero number $a$, we can always divide the PDE by $a$.

In conclusion, Theorem \ref{theoLax} tells us that any finite system of PDEs that is both \textbf{(a)} causal and \textbf{(b)} purely relaxational can be recast (for spectral purposes) in the form \eqref{Boltzmann}, where $w=||\E||$ coincides with the fastest characteristics speed of $\phi$ (and thus does not exceed $1$). As a consequence, the geometric structures and bounds derived in this work are not peculiar to self-adjoint kinetic or transient hydrodynamic theories, but are universal features of causal relaxational dynamics.

Let us discuss a couple of additional implications of Theorem \ref{theoLax}.

Consider a causal PDE of the form $(\partial_t{+}c_s\partial_x)\phi+``\text{higher derivatives''}{=}0$, or $(\partial_t^2{-}c_s^2\partial^2_x)\phi+``\text{higher derivatives''}{=}0$. These equations are models of sound, since they admit a hydrodynamic dispersion relation of the form $i\omega{=}c_sik+...$. Combining Theorems \ref{theoLax} and \ref{theo2}, we immediately conclude that, if the spectrum is purely relaxational, then $|c_s|\leq w$. Conversely, if $|c_s|>w$, then some excitation must become oscillatory (i.e. $i\omega\notin \mathbb{R}$) for some real $ik$, thereby leaving the relaxation plane.

Another remarkable aspect of Theorem \ref{theoLax} is that it clarifies the origin of the spectral similarity between BDNK and Israel--Stewart theory. Consider a choice of hydrodynamic frame such that the corresponding BDNK spectrum is purely relaxational. Then, there is a system of the form \eqref{Boltzmann} whose spectrum is identical to that of the given BDNK theory. However, when the number of degrees of freedom is small, all theories of the form \eqref{Boltzmann} are known to fall into universality classes, whose representative are various formulations of Israel--Stewart theory \cite{GavassinoUniveraalityI2023odx,GavassinoUniversalityII:2023qwl}. Hence, if a BDNK theory is purely relaxational, its spectrum is necessarily \textit{identical} to that of a corresponding Israel--Stewart theory.

\section{Conclusions}

In this work, we showed that relativistic theories with purely relaxational spectra naturally endow the $\{i\omega,ik\}$ plane with a Lorentzian geometric structure. Within this framework, the excitation spectrum of a medium behaves as a causal object propagating on the relaxation plane, with causality constraining dispersion relations to follow spacelike trajectories. This geometric perspective transforms a broad class of linear-response problems into elementary questions about the causal structure of the spectrum.

The resulting framework provides a unified language for phenomena that are usually discussed separately, including the breakdown of hydrodynamic expansions, the observer-dependence of relaxation processes, the structure of wakes and bow waves at large boosts, and the morphology of non-hydrodynamic sectors in kinetic theory. At the same time, it reveals that these apparently disparate phenomena are subject to a common set of geometric constraints. In this sense, the relaxation plane plays a role analogous to that of spacetime in relativistic kinematics: once the causal structure is known, many qualitative and quantitative properties become visible directly from geometry, often without solving the microscopic dynamics explicitly.

At a technical level, the geometric picture leads to rigorous and universal constraints on relativistic transport. Among other results, we derived sharp bounds on diffusivity and viscosity, established lower bounds on the regime of validity of hydrodynamics (including in boosted frames), quantified the maximal deviations from time dilation allowed by causality, and obtained general criteria governing the breakdown of dispersion relations, the observer dependence of spectral hierarchies, and the emergence of ballistic continua in kinetic theory. More broadly, the framework provides a systematic way of converting geometric information about the spectrum into quantitative physical constraints.

The present construction relies on the spectrum being purely relaxational. An important open question is the extent to which the resulting geometric picture persists in systems with genuinely oscillatory spectra, as occurs in strongly coupled holographic matter \cite{KovtunHolography2005,HellerBounds2022ejw}. More generally, it remains to be understood whether a broader framework exists that unifies oscillatory and relaxational dynamics within a common geometric language.

\section*{Acknowledgements}

I am grateful to M. Disconzi for reading the manuscript and providing useful feedback.
I also thank H. Reall and B. Withers for stimulating discussions. This work is supported by a MERAC Foundation prize grant,  an Isaac Newton Trust Grant, and funding from the Cambridge Centre for Theoretical Cosmology.

\appendix
\section{Wakes and bow waves}\label{aaa}

In this appendix, we briefly discuss the interpretation of evanescent-like modes as wakes and bow waves.

Consider a medium at rest, and an immersed object moving to the right with velocity $v>0$. We model the object as an external force of the form $F(t,x)\propto \delta(x-vt)$. Conventionally, the front of the object is taken to point in the direction of motion (for example, the bow of a ship moving through water) so that the bow-wave region corresponds to $x>vt$, while the wake occupies the region $x<vt$ \cite{NovakWithersObstacles:2018pnv}.

To model these structures, we boost to the rest frame of the object, $(t',x')=\gamma(t-vx,x-vt)$,
where the force becomes stationary, $F\propto\delta(x')$. In these coordinates, the equation of motion for a field $\phi$ describing the perturbation takes the form
\begin{equation}
\mathcal{L}(\partial_{t'},\partial_{x'}) \phi =F(x') \, ,
\end{equation}
where $\mathcal{L}(\partial_{t'},\partial_{x'})$ is a linear differential operator. Assuming stationarity in the comoving frame, we must solve the ordinary differential equation $\mathcal{L}(0,\partial_{x'}) \phi =F(x')$.
The general solution that remains finite at spatial infinity is
\begin{equation}\label{generalwakebow}
\phi=
\begin{cases}
\sum_n \mathcal{P}_n(x') e^{s_n x'} & \text{for }x'<0 \, , \\
\sum_m \mathcal{Q}_m(x') e^{-s_m x'} & \text{for }x'>0 \, ,
\end{cases}
\end{equation}
supplemented with appropriate matching conditions at $x'=0$. Here, the coefficients $s_n$ correspond to evanescent modes lying in the right Rindler wedge, while the $s_m$ correspond to modes in the left wedge. The functions $\mathcal{P}_n(x')$ and $\mathcal{Q}_m(x')$ are polynomials whose degree depends on the multiplicity of the corresponding mode.
We therefore conclude that, for $v>0$, modes in the right wedge describe wakes, while modes in the left wedge describe bow waves.

For $v<0$, the solution \eqref{generalwakebow} still applies, but the front of the object now points toward negative $x'$, so the distinction between wake and bow wave is reversed.

Below, we summarize these results in a mnemonic table:
\begin{center} \begin{tabular}{|c|c| c|} \hline & Left wedge & Right wedge \\ \hline $v<0$ & Wake & Bow \\ \hline $v>0$ & Bow & Wake \\ \hline \end{tabular} \end{center}

\section{Singularities of retarded correlators}\label{bbb}

In the context of classical PDEs, retarded correlators reduce to Green functions \cite{MullinsInfo2023tjg,GavassinoConsistentFluctuations:2024vyu}. In particular, one considers perturbing the equilibrium state with a localized source $A$ inserted at $(t,x)=(0,0)$, and studying the induced response in another observable $B$ at a different event. In kinetic-type theories governed by \eqref{Boltzmann}, both observables may be represented as elements of $\mathcal{H}$. Introducing the source term $A$ on the right-hand side of \eqref{Boltzmann},
\begin{equation}
(\partial_t+\s+\E\partial_x)\Psi
=
A\delta(t)\delta(x) \, ,
\end{equation}
the corresponding response function is the quantity $(B,\Psi)$. Passing to Fourier space, the associated Green function is
\begin{equation}\label{greenfunction}
\mathcal{G}_{AB}(i\omega,ik)
=
\left(B,(\s_{ik}-i\omega)^{-1}A\right) \, .
\end{equation}

We now prove the following result.

\begin{theorem}
Suppose that $\s$ and $\E$ are self-adjoint, with $\E$ bounded, and let $ik\in\mathbb{R}$. Then, $(i\omega,ik)$ belongs to the spectrum of $\s_{ik}$ if and only if, for every open neighborhood $\mathcal{U}$ of $(i\omega,ik)$, there exist $A,B\in\mathcal{H}$ such that $\mathcal{G}_{AB}$ fails to be holomorphic somewhere in $\mathcal{U}$.
\end{theorem}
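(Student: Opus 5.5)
The proof will run through the resolvent set of the two-parameter family $\s_{ik}=\s+ik\E$, viewed as a subset of $\mathbb{C}^2\ni(i\omega,ik)$, and through the fact that each $\mathcal{G}_{AB}$ in \eqref{greenfunction} is the matrix element of the resolvent. I read ``$\mathcal{G}_{AB}$ fails to be holomorphic somewhere in $\mathcal{U}$'' as: the function defined by \eqref{greenfunction} on $\mathcal{U}$ minus the spectral set does not extend holomorphically to all of $\mathcal{U}$. The two implications are handled separately.

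\emph{Direction ``$\Leftarrow$'' (proved by contraposition).} Suppose $i\omega\notin\mathrm{Spectrum}(\s_{ik})$, and set $R=(\s_{ik}-i\omega)^{-1}$, which is bounded. For nearby complex $(z,\kappa)$ we can write
\[
\s_\kappa-z=(\s_{ik}-i\omega)\bigl[1+R\bigl((\kappa-ik)\E-(z-i\omega)\bigr)\bigr].
\]
Since $\E$ is bounded, the bracket is invertible by a Neumann series whenever $\|R\|\bigl(w|\kappa-ik|+|z-i\omega|\bigr)<1$. This inequality defines an open neighborhood $\mathcal{U}$ of $(i\omega,ik)$ in $\mathbb{C}^2$. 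On $\mathcal{U}$ the resolvent is a norm-convergent power series in $(z-i\omega,\kappa-ik)$. Hence every $\mathcal{G}_{AB}$ is holomorphic on $\mathcal{U}$, which is the contrapositive of ``$\Leftarrow$''.

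\emph{Direction ``$\Rightarrow$''.} Let $i\omega\in\mathrm{Spectrum}(\s_{ik})$ with $ik$ real, and fix any neighborhood $\mathcal{U}$. I restrict to the complex slice $\kappa=ik$, and choose $\varepsilon>0$ so small that the closed disk $\{(z,ik):|z-i\omega|\le2\varepsilon\}$ lies in $\mathcal{U}$. Because $\s_{ik}$ is self-adjoint, it has a spectral resolution. Since $i\omega$ lies in its spectrum, the spectral projector $P=\chi_{(i\omega-\varepsilon,\,i\omega+\varepsilon)}(\s_{ik})$ is nonzero. Pick $v$ with $A=B=Pv\neq0$. Then, for $z\notin\mathbb{R}$,
\[
g(z)\equiv\mathcal{G}_{AA}(z,ik)=\int\frac{d\mu(\lambda)}{\lambda-z},
\]
where $\mu$ is a finite positive measure, nonzero, and supported in $[i\omega-\varepsilon,i\omega+\varepsilon]$. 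This is a Stieltjes transform: it is holomorphic on $\mathbb{C}\setminus\mathrm{supp}\,\mu$ and tends to $0$ as $|z|\to\infty$.

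Suppose $\mathcal{G}_{AA}$ extended holomorphically to $\mathcal{U}$. Its restriction to the slice would then extend $g$ holomorphically across the whole support of $\mu$, which lies inside the disk. The extended $g$ would therefore be entire, and it is bounded because it decays at infinity. By Liouville's theorem $g\equiv0$. Expanding at infinity gives $g(z)=-\sum_n z^{-n-1}\int\lambda^n d\mu$, so all moments of the compactly supported measure $\mu$ vanish. Hence $\mu=0$, by Weierstrass density of polynomials. This contradicts $\mu(\mathbb{R})=\|Pv\|^2>0$, so $\mathcal{G}_{AA}$ fails to be holomorphic somewhere in $\mathcal{U}$.

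\emph{Main obstacle.} The delicate point is making the notion of ``failing to be holomorphic'' precise when $\mathcal{G}_{AB}$ is only defined off the spectrum. The argument handles this by reducing to the one-variable slice $\kappa=ik$. It also chooses $A$ through a spectral projector, so that the relevant singular set is compact and entirely inside $\mathcal{U}$. That choice is what lets Liouville's theorem replace a more delicate local Stieltjes-inversion argument.
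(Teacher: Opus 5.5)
Your proof is correct, but for the nontrivial direction you take a different route from the paper.

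For ``$i\omega\notin\mathrm{Spectrum}(\s_{ik})\Rightarrow$ every $\mathcal{G}_{AB}$ is holomorphic near $(i\omega,ik)$'', your Neumann series around the bounded resolvent does by hand what the paper gets by citing Kato's theorem on resolvents of holomorphic families. There is no real difference there.

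For the converse, the paper argues abstractly. If every matrix element $(B,(\s_{ik}-s)^{-1}A)$ continues holomorphically across $i\omega$, then the bound $\|(\s_{ik}-s)^{-1}\|\le 1/|\mathfrak{Im}\,s|$ plus the uniform boundedness principle make the sesquilinear forms a bounded-holomorphic family at $i\omega$. Hence $i\omega$ lies in the resolvent set.

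You instead localize with the spectral theorem. Taking $A=B$ in the range of $\chi_{(i\omega-\varepsilon,\,i\omega+\varepsilon)}(\s_{ik})$ makes $\mathcal{G}_{AA}$ the Stieltjes transform of a nonzero positive measure supported inside the disk. A holomorphic continuation across the disk would then give a bounded entire function vanishing at infinity, which forces $\mu=0$.

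Two small remarks on your write-up. The gluing of the continuation with the Stieltjes transform is the identity theorem on the disk minus $\mathrm{supp}\,\mu$, where the two agree off the real axis. Also, positivity alone gives $\mu(\mathbb{R})=-\lim_{z\to\infty}z\,g(z)=0$, so the moment and Weierstrass step is unnecessary.

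What each route buys:
\begin{itemize}
\item Yours is more elementary. It also proves something sharper: a single diagonal correlator with a spectrally localized source already detects each spectral point. And it avoids having to turn merely weakly holomorphic forms into a bounded operator family at a point where boundedness is not known in advance; the paper compresses that step into one appeal to uniform boundedness.
\item The paper's argument never uses the spectral measure, only resolvent-norm control. It is therefore closer in spirit to something that could survive without self-adjointness (e.g.\ complex $ik$). Yours depends essentially on self-adjointness of $\s_{ik}$ at real $ik$, which the statement does assume.
\end{itemize}
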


\begin{proof}
The key observation is that $(\s_{ik}-s)^{-1}$ is the resolvent of $\s_{ik}$ evaluated at $s$, and therefore becomes singular precisely when $s$ belongs to the spectrum.

Suppose first that $i\omega$ does \textit{not} belong to the spectrum of $\s_{ik}$. The family $\s_{ik}$ is holomorphic in the operator sense, being the sum of the trivially holomorphic operator $\s$ and the bounded-holomorphic family $ik\E$ \cite[\S 7.1.2,~Problem~1.2]{Kato_Perturbation_Theory}. Moreover, each $\s_{ik}$ is closed. We may therefore apply a standard result from analytic perturbation theory \cite[\S 7.1.2, Theorem 1.3]{Kato_Perturbation_Theory}, which implies that the resolvent $(\s_{ik}-s)^{-1}$ is bounded-holomorphic in an open neighborhood $\mathcal{U}$ of $(i\omega,ik)$. Consequently, every matrix element \eqref{greenfunction} is well defined and holomorphic throughout $\mathcal{U}$.

Conversely, we prove the contrapositive. Fix $ik\in\mathbb{R}$, and assume that there exists an open neighborhood $\mathcal{U}$ of $(i\omega,ik)$ such that, for all $A,B\in\mathcal{H}$, the Green functions $\mathcal{G}_{AB}$ are holomorphic throughout $\mathcal{U}$. Taking the slice at fixed $ik\in  \mathbb{R}$, we obtain a neighborhood $\mathcal{V}$ of $i\omega$ such that, for all $A,B\in\mathcal{H}$, the functions $\left(B,(\s_{ik}-s)^{-1}A\right)$ are holomorphic for $s\in\mathcal{V}$.
Let $2\epsilon$ denote the radius of a disk centered at $i\omega$ and contained in $\mathcal{V}$. Since $\s_{ik}$ is self-adjoint, its resolvent exists away from the real axis and satisfies $||(\s_{ik}-s)^{-1}||\leq 1/|\mathfrak{Im}\,s|$. Hence, along the sequence $s_n=i\omega+i\epsilon(1+1/n)\in\mathcal{V}$, the sesquilinear forms $\left(*,(\s_{ik}-s_n)^{-1}*\right)$ are uniformly bounded. Combined with holomorphicity, the principle of uniform boundedness \cite[\S 7.4.1, Remark 4.1]{Kato_Perturbation_Theory} then implies that the sesquilinear form $\left(*,(\s_{ik}-s)^{-1}*\right)$ extends to a bounded-holomorphic family throughout $\mathcal{V}$, including at $i\omega$. It follows that $(\s_{ik}-s)^{-1}$ itself is bounded-holomorphic in $\mathcal{V}$, and therefore that $i\omega$ belongs to the resolvent set, namely the complement of the spectrum.
\end{proof}

\section{Equivalence between causality and the operator bound}\label{ccc}

Here, we prove that equation \eqref{Boltzmann} is causal if and only if $w\equiv ||\E||$ does not exceed $1$. When $\mathcal{H}$ is finite-dimensional, the proof is straightforward. Indeed, equation \eqref{Boltzmann} reduces to a symmetric hyperbolic system of PDEs whose characteristic speeds $w_n$ satisfy $\det(\E-w_n)=0$ \cite{Hishcock1983}. These are simply the eigenvalues of $\E$, and the requirement that $w_n\in[-1,1]$ for all $n$ is equivalent to the bound $||\E||\leq1$.

The infinite-dimensional case lies outside the scope of standard PDE theory, and different techniques are required. To avoid unnecessary complications, we will assume that the initial-value problem associated with \eqref{Boltzmann} is well-posed forward in time. This requires $\s$ to be bounded from below. Moreover, since shifting $\s\mapsto\tilde{\s}=\s-a$ with $a\in\mathbb{R}$ simply rescales every solution by an overall factor $e^{-at}$, leaving domains of dependence unchanged, we will assume without loss of generality that $\s$ is non-negative definite (since we can always shift the lower bound on the spectrum above zero). In practice, this is not a restrictive assumption, since descriptions of stable phases of matter are required to contain no unstable-like modes, which implies that $\s$ is non-negative definite (set $ik=0$ in \eqref{boltzmannFourier}).

\subsubsection{The operator bound implies causality}

Let us first show that, if $w\leq 1$, then the evolution is causal. 

We follow the same strategy as in \cite{GavassinoCausality2021}. In particular, taking the inner product of \eqref{Boltzmann} with $\Psi$, adding to the result its complex conjugate, and recalling that $\s$ and $\E$ are self-adjoint, we obtain the identity
\begin{equation}\label{infona}
\partial_t \dfrac{(\Psi,\Psi)}{2}+\partial_x \dfrac{(\Psi,\E\Psi)}{2}+(\Psi,\s\Psi)=0 .
\end{equation}
We can use this balance law to prove that, if $\Psi(t,x)$ vanishes for $x\geq 0$ at $t=0$, then it must also vanish for $x\geq wt$ at $t>0$, which is precisely the statement of causality (for $w\leq 1$). To this end, we consider the family of integrals
\begin{equation}
\mathcal{I}(t)=\int_{wt}^{\infty} \dfrac{(\Psi,\Psi)}{2}\bigg|_{\text{at }(t,x)} dx\, .
\end{equation}
Clearly, $\mathcal{I}(t)\geq 0$ and $\mathcal{I}(0)=0$. On the other hand, using \eqref{infona} and \cite[\S 2.3, Eq. (2.65)]{TeschlBook}, we find that
\begin{equation}
\begin{split}
\dfrac{d\mathcal{I}}{dt}={}& -w\dfrac{(\Psi,\Psi)}{2}\bigg|_{\text{at }(t,wt)}+\int_{wt}^{\infty}\partial_t\dfrac{(\Psi,\Psi)}{2}\bigg|_{\text{at }(t,x)} dx \\
={}& -w\dfrac{(\Psi,\Psi)}{2}\bigg|_{\text{at }(t,wt)} +\dfrac{(\Psi,\E\Psi)}{2}\bigg|_{\text{at }(t,wt)}-\int_{wt}^{\infty}(\Psi,\s\Psi)\big|_{\text{at }(t,x)} dx\leq 0\, , \\
\end{split}
\end{equation}
where we invoked the fact that $\s$ is non-negative definite and $w=||\E||$.
Thus, $\mathcal{I}(t)$ must vanish at all positive times, implying that $\Psi(t,x)=0$ for $x\geq wt>0$, as required.

\subsubsection{Causality implies the operator bound}

We now prove that, if $w>1$, then either the evolution is acausal, or the initial value problem becomes ill-posed in some boosted frame, in the sense that solutions fail to exist for some smooth square-integrable initial data. Either outcome is incompatible with the standard principles of special relativity.

Suppose that $w>1$. Then there exists a vector $\Phi\in\mathcal{H}$ with unit norm such that $|(\Phi,\E\Phi)|>1$ \cite[\S 2.3, Eq. (2.65)]{TeschlBook}. Without loss of generality, we may assume that $E\equiv(\Phi,\E\Phi)$ is positive (otherwise, one may reverse the orientation of the $x$ axis). Fix any $v\in(E^{-1},1)$, and consider the initial value problem with data prescribed on the spacelike line $t=vx$. Assuming that \eqref{Boltzmann} admits a well-posed initial value formulation in every inertial frame (and in particular in the frame moving with velocity $v$) we may construct solutions with arbitrary sufficiently regular data on this line. In particular, we may choose a solution satisfying $\Psi(vx,x)=f(x)\Phi$, where $f$ is smooth and compactly supported.
By causality, the support of this solution must remain confined within the future light cone of the support of $f$. One may therefore construct a triangle as in Fig.~\ref{fig:DiagramForCausality}. Integrating \eqref{infona} over this triangle yields
\begin{equation}
\int_{t=t_f} \dfrac{(\Psi,\Psi)}{2} dx -\int_{t=vx} \left[\dfrac{(\Psi,\Psi)}{2}-v\dfrac{(\Psi,\E\Psi)}{2}\right] dx+\int_{\text{triangle}} (\Psi,\s\Psi) dt dx =0 \, ,
\end{equation}
where causality was used to set $\Psi$ to zero on the left side of the triangle. Since $\s$ is non-negative definite and $\Psi=f\Phi$ on the hypotenuse, it follows that
\begin{equation}
(1-vE)\int \dfrac{|f|^2}{2} dx \geq 0 \, .
\end{equation}
However, by construction, $v>E^{-1}$, and therefore $1-vE<0$, giving a contradiction. We conclude that, if the evolution is causal and the initial value problem is well posed in every inertial frame, then such a vector $\Phi$ cannot exist, and therefore $w\leq1$.
\begin{figure}[h!]
    \centering
\includegraphics[width=0.4\linewidth]{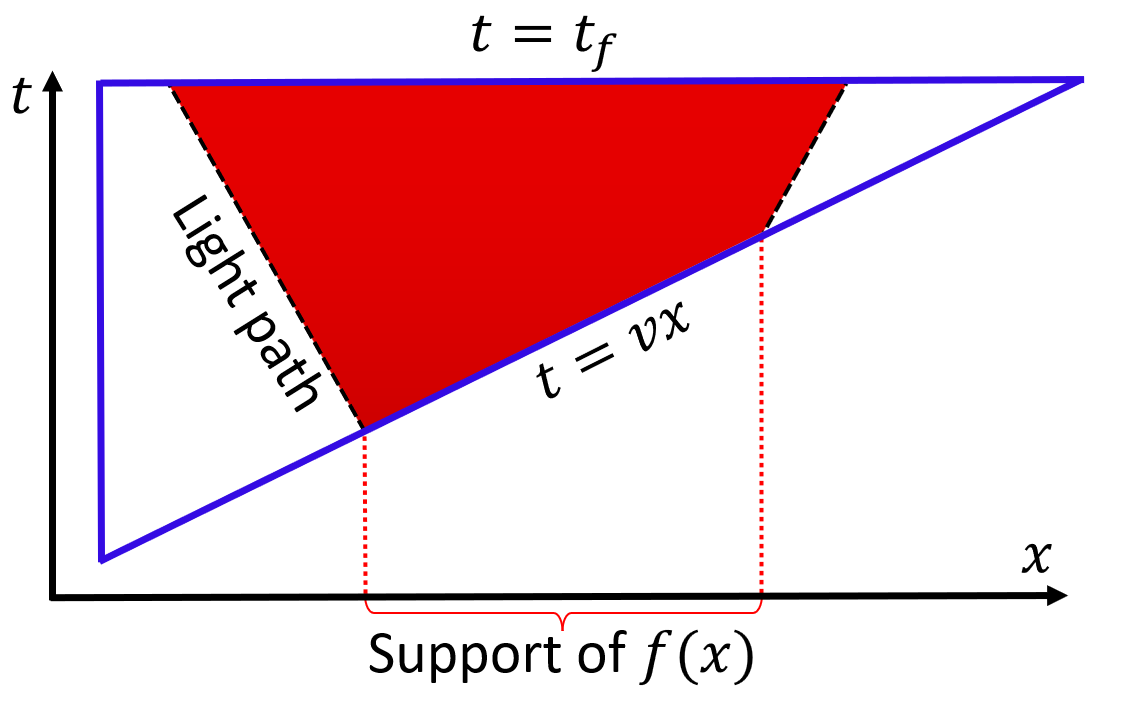}
\caption{Minkowski diagram illustrating that assuming causality, covariant well-posedness, and $||\E||>1$ leads to a contradiction. Indeed, well-posedness in boosted reference frames together with $||\E||>1$ implies that the system can be initialized on a spacelike line $t=vx$ with compactly supported data satisfying $\Psi\propto\Phi$, where $(\Phi,\E\Phi)>v^{-1}>1$ and $(\Phi,\Phi)=1$. By causality, the support of $\Psi$ can then expand at most at the speed of light. One may therefore construct a triangular spacetime region bounded by a constant-time slice $t=t_f$, the spacelike line $t=vx$, and a timelike boundary that does not intersect the support of $\Psi$. Integrating \eqref{infona}, which has the structure of a balance law $\partial_tE^0+\partial_xE^1=-\sigma\leq0$, over this region and applying Gauss' theorem implies that the flux through the final-time slice must be smaller than or equal to the flux through the line $t=vx$. However, the former is necessarily non-negative, since $E^0=(\Psi,\Psi)/2\geq0$, while the latter is strictly negative by construction, since $E^0-vE^1\propto 1-v(\Phi,\E\Phi)<0$.}
    \label{fig:DiagramForCausality}
\end{figure}

\bibliography{Biblio}

\label{lastpage}
\end{document}